	\let\over=\@@over \let\overwithdelims=\@@overwithdelims
	\let\atop=\@@atop \let\atopwithdelims=\@@atopwithdelims
  	\let\above=\@@above \let\abovewithdelims=\@@abovewithdelims
\newcommand{\matt}{\ensuremath{\mathcal{T}}}
\newcommand{\matc}{\ensuremath{\mathcal{C}}}
\newcommand{\FF}{\ensuremath{\mathbb{F}}}
	\newcommand{\eqref}[1]{~(\ref{#1})}
	\def\mod{\mathop{\rm mod}}
\def\exp{\mathop{\rm exp}}
\def\EE{\mathbb{E}\,}
\def\PP{\mathbb{P}}
\def\eqdef{\stackrel{\triangle}{=}}
\def\rad{\mathop{\rm rad}}
\def\arad{\mathop{\overline{\rm rad}}}
\def\unifto{\mathop{{\mskip 3mu plus 2mu minus 1mu%
	\setbox0=\hbox{$\mathchar"3221$}%
	\raise.6ex\copy0\kern-\wd0%
	\lower0.5ex\hbox{$\mathchar"3221$}}\mskip 3mu plus 2mu minus 1mu}}
\def\simleq{{{\mskip 3mu plus 2mu minus 1mu%
	\setbox0=\hbox{$\mathchar"013C$}%
	\raise.2ex\copy0\kern-\wd0%
	\lower0.9ex\hbox{$\mathchar"0218$}}\mskip 3mu plus 2mu minus 1mu}}
\def\simleq{\lesssim}
\def\simgeq{{{\mskip 3mu plus 2mu minus 1mu%
	\setbox0=\hbox{$\mathchar"013E$}%
	\raise.2ex\copy0\kern-\wd0%
	\lower0.9ex\hbox{$\mathchar"0218$}}\mskip 3mu plus 2mu minus 1mu}}
\def\simgeq{\gtrsim}
\newif\ifmapx
\edef\jobnametmp{\expandafter\string\csname listconv_apx\endcsname}
\edef\jobnameapx{\expandafter\mkillslash\jobnametmp}
\edef\jobnameexpand{\jobname}
\long\def\apxonly#1{\ifmapx{\color{blue}#1}\fi}
\newtheorem{theorem}{Theorem}
\newtheorem{lemma}[theorem]{Lemma}
\newtheorem{corollary}[theorem]{Corollary}
\newtheorem{remark}{Remark}
\begin{document}

\title{Upper bound on list-decoding radius of binary codes}

\author{Yury Polyanskiy
\thanks{YP is with the Department of Electrical Engineering 
and Computer Science, MIT, Cambridge, MA 02139 USA.
	\mbox{e-mail:~{\ttfamily yp@mit.edu}.}}%
\thanks{
The research was supported by the NSF grant CCF-13-18620 and NSF Center for Science of Information (CSoI) 
under grant agreement CCF-09-39370. This work was presented at 2015 IEEE International Symposium on Information Theory
(ISIT), Hong Kong, CN, Jun 2015.}}
%

\maketitle

\begin{abstract} Consider the problem of packing Hamming balls of a given relative radius subject to the
constraint that they cover any point of the ambient Hamming space with multiplicity at most $L$. For odd $L\ge 3$ 
an asymptotic upper bound on the rate of any such packing is proven. The resulting bound improves the best known bound
(due to Blinovsky'1986) for rates below a certain threshold. The method is a superposition of the linear-programming
idea of Ashikhmin, Barg and Litsyn (that was used previously to improve the estimates of Blinovsky for $L=2$) and a
Ramsey-theoretic technique of Blinovsky. As an application it is shown that for all odd $L$ the slope of the rate-radius
tradeoff is zero at zero rate.
\end{abstract}

\begin{IEEEkeywords}
Combinatorial coding theory, list-decoding, converse bounds
\end{IEEEkeywords}

%
\IEEEpeerreviewmaketitle


\section{Main result and discussion}

One of the most well-studied problems in information theory asks to find the maximal rate at which codewords can be
packed in binary space with a given minimum distance between codewords. Operationally, this (still unknown) rate gives
the capacity of the binary input-output channel subject to adversarial noise of a given level. A natural generalization
was considered by Elias and Wozencraft~\cite{PE57,JW58}, who allowed the decoder to output a list of size $L$. In
this paper we provide improved upper bounds on the latter question.

Our interest in bounding the asymptotic tradeoff for the list-decoding problem is motivated by our study of fundamental
limits of joint source-channel communication~\cite{KMP12isit}. Namely, in~\cite[Theorem 6]{YP15-cjscc_isit} we proposed
an extension of the previous result in \cite[Theorem 7]{KMP12isit} that required bounding rate for the
list-decoding problem.

We proceed to formal definitions and brief overview of known results. 
For a binary code $\matc \subset \FF_2^n$ we define its list-size $L$ decoding radius as
$$ \tau_L(\matc) \eqdef {1\over n} \max\{r: \forall x\in\FF_2^n \,\, |\matc \cap \{x+B_r^n\}| \le L \}\,, $$
where Hamming ball $B_r^n$ and Hamming sphere $S_r^n$ are defined as
\begin{align} 
B_r^n &\eqdef \{x\in \FF_2^n: |x| \le r\}\,, \\
S_r^n &\eqdef \{x\in \FF_2^n: |x| = r\} 
\end{align}
with $|x| = |\{i: x_i = 1\}|$ denoting the Hamming weight of $x$. Alternatively, we may define $\tau_L$ as
follows:\footnote{${\matc \choose j}$ denotes the set of all subsets of $\matc$ of size $j$.}
$$ \tau_L(\matc) = {1\over n} \left( \min\left\{ \rad(S): S\in {\matc \choose L+1}\right\} - 1
\right)\,,$$
where $\rad(S)$ denotes radius of the smallest ball containing $S$ (known as Chebyshev radius):
$$ \rad(S) \eqdef \min_{y\in\FF_2^n} \max_{x\in S} |y-x|\,. $$
	
The asymptotic tradeoff between rate and list-decoding radius $\tau_L$ is defined as usual:
\begin{align} \tau_L^*(R) &\eqdef \limsup_{n\to\infty} \max_{\matc: |\matc| \ge 2^{nR}} \tau_L(\matc) \\
   R_L^*(\tau) &\eqdef \limsup_{n\to\infty} \max_{\matc: \tau_L(\matc) \ge \tau} {1\over n} \log |\matc| 
\end{align}

The best known upper (converse) bounds on this tradeoff are as follows:
\begin{itemize}
\item List size $L=1$: The best bound to date was found by McEliece, Rodemich, Rumsey and
	Welch~\cite{MRRW77}:
	\begin{align} R_1^*(\tau) &\le R_{LP2}(2\tau)\,,\\
		R_{LP2}(\delta) &\eqdef \min \log 2 -h(\alpha) + h(\beta)\,, 
	\end{align}		
	where $h(x)=-x \log x - (1-x) \log(1-x)$ and minimum is taken over all $0\le\beta\le\alpha\le1/2$ satisfying
		$$ 2{\alpha (1-\alpha) - \beta(1-\beta)\over 1+2\sqrt{\beta (1-\beta)}} \le \delta $$
	For rates $R<0.305$ this bound coincides with the simpler bound:
	\begin{align} \tau_1^*(R) &\le {1\over 2} \delta_{LP1}(R)\,,\\
	 \delta_{LP1}(R) &\eqdef {1\over 2} - \sqrt{\beta(1-\beta)}\,, \quad R=\log 2-h(\beta)\,,  
	\end{align}
	where $\beta\in[0,{1\over2}]$.
\item List size $L=2$: The bound found by Ashikhmin, Barg and Litsyn~\cite{ashikhmin2000new} is given
	as%
	\footnote{This result follows from \apxonly{(fixing typos and)} optimizing~\cite[Theorem 4]{ashikhmin2000new}. It is
	slightly stronger than what is given in~\cite[Corollary 5]{ashikhmin2000new}.\apxonly{The improvement is due
	to the fact that taking $\delta = \delta_{LP2}(R)$ is not always the best for Theorem 4. Taking
	$\delta>\delta_{LP2}$ yields about $10^{-3}$ improvement in rate in the range $\tau>\tau_0$.}}%
	$$ R_2^*(\tau) \le \log 2-h(2\tau)+R_{up}(2\tau, 2\tau)\,,$$
	where $R_{up}(\delta, \alpha)$ is the best known upper bound on rate of codes with minimal distance $\delta n$
	constrained to live on Hamming spheres $S_{\alpha n}^n$. The expression for $R_{up}(\delta, \alpha)$ can be
	obtained by using the linear programming bound from~\cite{MRRW77} and applying Levenshtein's monotonicity,
	cf.~\cite[Lemma 4.2(6)]{AS01}. The resulting expression is
	\begin{equation} R_2^*(\tau) \le \begin{cases} R_{LP2}(2\tau)\,, &\tau \le \tau_0\\
	\log 2-h(2\tau) + h(u(\tau)), & \tau > \tau_0\,,\end{cases} 
	\label{eq:abl2}
	\end{equation}	
	where $\tau_0 \approx 0.1093$ and 
	$$ u(\tau)={1\over2} - \sqrt{{1\over 4} - (\sqrt{\tau-3\tau^2}-\tau)^2}$$
	(cf.~\cite[(9)]{AS01}).
\item For list sizes $L\ge 3$: The original bound of Blinovsky~\cite{blinovsky1986bounds} appears to be  the best 
(before this work):
\begin{equation}\label{eq:blin}
		\tau_L^*(R) \le \sum_{i=1}^{\lceil L/2 \rceil} {{2i-2\choose i-1}\over i} (\lambda(1-\lambda))^i\,, \qquad
	R=1-h(\lambda)\,,  
\end{equation}
	where $\lambda \in [0,{1\over2}]$.
	Note that~\cite{blinovsky1986bounds} also gives a non-constructive lower bound on $\tau_L^*(R)$.
Results on list-decoding over non-binary alphabets are also known, see~\cite{blinovsky2005code,guruswami2005lower}.
\end{itemize}

In this paper we improve the bound of Blinovsky for lists of odd size and rates below a certain threshold. To that
end we will mix the ideas of Ashikhmin, Barg and Litsyn (namely, extraction of a large spectrum component from the code) and
those of Blinovsky (namely, a Ramsey-theoretic reduction to study of symmetric subcodes).

To present our main result, we need to define exponent of Krawtchouk polynomial $K_{\beta n}(\xi n)=\exp\{n
E_\beta(\xi) + o(n)\}$. For $\xi\in[0,{1\over2}-\sqrt{\beta(1-\beta)}]$ the value of $E_\beta(\xi)$ was found
in~\cite{KL95}. Here we give it in the following parametric form, cf.~\cite{IS98} or~\cite[Lemma 4]{YP13}:
\begin{align}\label{eq:ed_2}
	E_\beta(\xi) &= \xi \log(1-\omega) + (1-\xi) \log(1+\omega) - \beta \log \omega \\
	 \xi &= {1\over2}( 1-(1-\beta)\omega - \beta \omega^{-1})\,,
\end{align}
where 
$$ \omega \in \left[{\beta\over 1-\beta}, \sqrt{\beta\over 1-\beta}\right]\,. $$

Our main result is the following:
\begin{theorem}\label{th:main}
	Fix list size $L\ge2$, rate $R$ and an arbitrary $\beta \in [0, 1/2]$ with $h(\beta)\le R$. Then any sequence of codes $\matc_n\subset \{0,1\}^n$ of rate $R$ satisfies
\begin{multline}\label{eq:thmain}
			\limsup_{n\to\infty} \tau_{L}(\matc_n) \le \\\max_{j,\xi_0} \xi_0 g_{j}\left(1-{\xi_1\over 2\xi_0}\right) 
			+ (1-\xi_0) g_j\left(\xi_1\over 2(1-\xi_0)\right)\,, 
\end{multline}			
where maximization is over $\xi_0$ satisfying
\begin{equation}\label{eq:xi0ineq}
		0 \le \xi_0 \le {1\over2} - \sqrt{\beta(1-\beta)} 
\end{equation}
and $j$ ranging over $\{0, 1, 3, \ldots, 2k+1, \ldots, L\}$ if $L$ is odd and over $\{0,
2, \ldots,2k, \ldots L\}$ if $L$ is even. Quantity $\xi_1 = \xi_1(\xi_0, \delta, R)$ is a unique solution of
\begin{multline}\label{eq:xidef}
		R + h(\beta) - 2E_\beta(\xi_0) = \\
			h(\xi_0) - \xi_0 h\left(\xi_1\over 2\xi_0\right) - (1-\xi_0) h\left(\xi_1\over 2(1-\xi_0)\right)\,,
\end{multline}
	on the interval $[0, 2\xi_0 (1-\xi_0)]$ and functions $g_j(\nu)$ are defined as
	\begin{equation}\label{eq:gdef}
			g_j(\nu) \eqdef {1\over L+j} \left(L\nu - \EE[|2W-L-j|^+]\right)\,, W \sim \mathrm{Bino}(L, \nu) 
\end{equation}	
\end{theorem}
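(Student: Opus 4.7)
The strategy is to overlay two existing techniques. By the alternative definition of $\tau_L$ given earlier, it suffices to exhibit $L+1$ codewords in $\matc_n$ whose Chebyshev radius is asymptotically bounded by the right-hand side of~\eqref{eq:thmain}; any such bound immediately transfers to $\tau_L(\matc_n)$. The two main ingredients are the Ashikhmin-Barg-Litsyn spectrum extraction via linear programming (responsible for $\beta$, $\xi_0$ and $\xi_1$) and Blinovsky's Ramsey-theoretic symmetrization (responsible for the decomposition through $g_j$). The argument consists of three successive refinements of the code, each of which preserves an exponential number of codewords, followed by a coordinate-by-coordinate evaluation of the Chebyshev radius on the resulting symmetric tuple.

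The first refinement is an ABL-style step. Translating $\matc_n$ by a suitable reference codeword and applying the LP bound to its distance distribution, I extract, for any admissible $\beta$, an exponentially large subcode lying on the Hamming sphere $S_{\beta n}^n$ whose rate is essentially $R + h(\beta) - \log 2$. The admissibility condition $h(\beta)\le R$ is precisely what ensures this extraction is nontrivial. The second refinement restricts further by double-sphere intersection: fixing a second reference codeword and invoking Levenshtein's version of the LP bound on the sphere, I shrink to the sub-subcode whose codewords lie in the intersection of two balls of radius $\beta n$ about reference points lying at relative distance governed by $\xi_0$. Here the Krawtchouk exponent $E_\beta(\xi_0)$ and the constraint~\eqref{eq:xi0ineq} enter; the rate-balance identity~\eqref{eq:xidef} arises by equating the surviving exponent $R + h(\beta) - 2E_\beta(\xi_0)$ to the entropy of the joint column-type distribution compatible with pairwise intersection parameter $\xi_0$ and triple intersection parameter $\xi_1$.

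The third refinement is Blinovsky's Ramsey argument. Within the still exponentially large surviving subcode, I select $L+1$ codewords so that the column types of the resulting $(L+1)\times n$ matrix are approximately invariant under permutations of the codeword indices. For such a symmetric $(L+1)$-tuple the Chebyshev radius decomposes additively across coordinates: each column type falls into a finite list indexed by the parity parameter $j$ (odd for odd $L$, even for even $L$, reflecting which symmetric profiles can occur), and the per-column contribution to the radius equals $g_j(\nu)$ where the local one-density $\nu$ is dictated by the marginals. Splitting the $n$ coordinates into a block of fractional size $\xi_0$ (where the second reference codeword carries a $1$) and one of size $1-\xi_0$ (where it carries a $0$), the local densities evaluate to $1-\tfrac{\xi_1}{2\xi_0}$ and $\tfrac{\xi_1}{2(1-\xi_0)}$ respectively, and summing the two blockwise contributions produces the right-hand side of~\eqref{eq:thmain}. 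The final maximum over $j$ and $\xi_0$ is obtained by optimizing the choice of which column-type profile the Ramsey step targets.

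The main obstacle is the bookkeeping of the three extractions. One must verify that after two successive sphere-style restrictions (each consuming an $E_\beta(\xi_0)$-sized exponent), the surviving subcode is still large enough in the regimes where~\eqref{eq:thmain} is nontrivial for Blinovsky's Ramsey argument to yield a symmetric $(L+1)$-tuple, and that the geometric constraints imposed by the double-sphere intersection transfer faithfully into the per-column statistics used in the $g_j$ evaluation. The case split over $j$ is not cosmetic: the Chebyshev center of a symmetric tuple behaves qualitatively differently depending on whether the dominant column types are strictly majority-weighted or half-balanced, and each case must be optimized separately before the outer maximum is taken.
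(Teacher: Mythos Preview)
Your plan misidentifies the roles of the three parameters $\beta$, $\xi_0$, $\xi_1$, and as a result misses the mechanism that actually drives the bound.

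In the paper, $\beta$ is \emph{not} the radius of any sphere; it is the degree parameter of the Krawtchouk test function in the Kalai--Linial spectrum estimate. The first (and only) LP step is Kalai--Linial: for any $\beta$ with $h(\beta)\le R$ there is some $\xi_0$ in the range~\eqref{eq:xi0ineq} such that the distance distribution has $A_{\xi_0 n}(\matc)\ge \exp\{n(R+h(\beta)-2E_\beta(\xi_0))\}$. This directly hands you a subcode of rate $R'=R+h(\beta)-2E_\beta(\xi_0)$ sitting on the sphere $c_0+S_{\xi_0 n}$ around a \emph{codeword} $c_0\in\matc$. The factor $2E_\beta(\xi_0)$ comes from a single LP step (it is the squared Krawtchouk exponent), not from two successive sphere restrictions. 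Your step~1, which puts a subcode on $S_{\beta n}^n$ with rate $R+h(\beta)-\log 2$, is the ordinary Elias--Bassalygo reduction; it loses exactly the feature that makes the argument work, namely that the center $c_0$ is itself a codeword.

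The second reduction is not an LP step at all: it is a plain Elias--Bassalygo averaging over centers $y\in S_{\xi_0 n}^n$ (not codewords), producing a subcode on $S_{\xi_0 n}\cap(y+S_{\xi_1 n})$. The equation~\eqref{eq:xidef} simply equates $R'$ to the exponent of this intersection's cardinality; no Levenshtein-type LP on spheres is invoked.

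Because $c_0$ is a codeword, the Ramsey symmetrization is applied to only $L$ (not $L{+}1$) codewords $c_1,\ldots,c_L$ on the sphere, and the object being bounded is $\rad(c_0,c_1,\ldots,c_L)$ with $c_0$ treated asymmetrically. The index $j$ is not a column-type label: it emerges from a von Neumann minimax that expresses the true Chebyshev radius of $\{c_0,c_1,\ldots,c_L\}$ as $\max_j h_j$, where $h_j$ is the \emph{average} radius of the multiset $\{\underbrace{c_0,\ldots,c_0}_{j},c_1,\ldots,c_L\}$. The restricted range of $j$ comes from the breakpoints of a piecewise-linear function of the weight $p_0$ attached to $c_0$; it has nothing to do with ``which symmetric profiles can occur''. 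Your proposal to symmetrize all $L{+}1$ codewords would collapse the $j$-optimization entirely and reproduce Blinovsky's bound rather than improve on it.
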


As usual with bounds of this type, cf.~\cite{BM05}, it appears that taking $h(\beta)=R$ can be done without loss. Under
such choice, our bound outperforms Blinovsky's for all odd $L$ and all rates small enough (see Corollary~\ref{th:slope} below). The bound for $L=3$
is compared in Fig.~\ref{fig:comp} with the result of Blinovsky numerically. For larger odd $L$
the comparison is similar, but the range of rates where our bound outperforms Blinovsky's becomes smaller, see
Table~\ref{tab:comp}.

Evaluation of Theorem~\ref{th:main} is computationally possible, but is somewhat tedious. Fortunately, for small $L$ the
maximum over $\xi_0$ and $j$ is attained at $\xi_0 = {1\over2}-\sqrt{\beta(1-\beta)}$ and $j=1$. We rigorously prove
this for $L=3$:%
\footnote{Notice that proofs of each of
the two Corollaries below contain different relaxations of the bound~\eqref{eq:thmain}, e.g.~\eqref{eq:slo1}, which are
easier to evaluate. Notice also that in Table~\ref{tab:comp} for the last two entries ($L=9,11$) at the high endpoint of
rate the maximum over $\xi_0$ is attained \textit{not} at ${1\over2}-\sqrt{\beta(1-\beta)}$.}

\begin{corollary}\label{th:l3}
	For list-size $L=3$ we have
	\begin{equation} 
	\tau_L^*(R) \le {3\over 4} \delta - {1\over 16}\left({(2\delta - \xi_1)^3\over \delta^2} + {\xi_1^3\over (1-\delta)^2}
	\right)\,,\label{eq:bd3b}
	\end{equation}	
	where $\delta\in(0,1/2]$ and $\xi_1 \in [0,2\delta(1-\delta)]$ are functions of $R$ determined from
	\begin{align} R &= h\left({1\over2} - \sqrt{\delta (1-\delta)}\right)\label{eq:bd3a}\,,\\
		R &= \log 2 - \delta h\left(\xi_1\over 2\delta\right) - (1-\delta) h\left(\xi_1\over
		2(1-\delta)\right) \label{eq:bd3c}
	\end{align}	
\end{corollary}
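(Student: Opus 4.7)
The plan is to specialize Theorem~\ref{th:main} to $L=3$ by taking $j=1$ and $\xi_0=\delta:=\tfrac{1}{2}-\sqrt{\beta(1-\beta)}$ (the upper endpoint of the interval in \eqref{eq:xi0ineq}), with $\beta$ chosen so that $h(\beta)=R$; this parametrization is equivalent to the relation $R=h(\tfrac12-\sqrt{\delta(1-\delta)})$ of \eqref{eq:bd3a}. The key algebraic identity to establish first is
\[ 2 E_\beta(\delta) \;=\; \log 2 + h(\beta) - h(\delta), \]
valid at the MRRW-1 boundary. It follows by evaluating the parametric form \eqref{eq:ed_2} at $\omega=\sqrt{\beta/(1-\beta)}$; introducing the single parameter $t:=\sqrt{\beta/(1-\beta)}$ and using $\delta=(1-t)^2/[2(1+t^2)]$, $1-\delta=(1+t)^2/[2(1+t^2)]$ makes the manipulation routine. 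Substituting this identity and $h(\beta)=R$ into \eqref{eq:xidef} cancels the $h(\delta)$ terms and leaves $\log 2 - R = \delta\, h(\xi_1/(2\delta)) + (1-\delta)\, h(\xi_1/(2(1-\delta)))$, i.e., exactly \eqref{eq:bd3c}.

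Next I would compute $g_1$ in closed form for $L=3$. Reading $|\cdot|^+$ as the positive part and taking $W\sim\mathrm{Bino}(3,\nu)$, the random variable $|2W-4|^+$ equals $2$ only at $W=3$ and is $0$ otherwise, so $g_1(\nu)=(3\nu-2\nu^3)/4$. Writing $a:=1-\xi_1/(2\delta)$ and $b:=\xi_1/(2(1-\delta))$, the identity $\delta a + (1-\delta) b = \delta$ collapses the linear part of $\delta g_1(a)+(1-\delta) g_1(b)$ to $3\delta/4$, while the cubic part contributes $-\tfrac{1}{16}\bigl[(2\delta-\xi_1)^3/\delta^2 + \xi_1^3/(1-\delta)^2\bigr]$, reproducing exactly \eqref{eq:bd3b}.

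The main obstacle is justifying that this specific $(j,\xi_0)$ attains the maximum in \eqref{eq:thmain} rather than merely being one feasible choice. I would first compute the two other cases in closed form, $g_0(\nu)=\nu(1-\nu)$ and $g_3(\nu)=\nu/2$, and check that at $\xi_0=\delta$ the $j=1$ value dominates in the rate range of interest. Then, holding $j=1$ fixed, I would show that the objective is nondecreasing in $\xi_0$ on $[0,\delta]$, so the maximum over $\xi_0$ is attained at the endpoint. The monotonicity in $\xi_0$ is the technically hardest step, since $\xi_1$ depends implicitly on $\xi_0$ through \eqref{eq:xidef}: implicit differentiation expresses $d\xi_1/d\xi_0$ in terms of $E_\beta'(\xi_0)$, $h'(\xi_0)$, and the logarithmic derivatives of $h$ at $\xi_1/(2\xi_0)$ and $\xi_1/(2(1-\xi_0))$, and the sign of the resulting total derivative has to be controlled carefully near the endpoint. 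Following the hint in the footnote, the cleanest route is likely to upper-bound \eqref{eq:thmain} by a convex-type relaxation that bypasses the implicit derivative and whose value matches \eqref{eq:bd3b} exactly at $\xi_0=\delta$.
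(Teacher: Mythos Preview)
Your setup and algebra are correct and match the paper exactly: the choice $h(\beta)=R$, the evaluation $2E_\beta(\delta)=\log 2+h(\beta)-h(\delta)$ at $\omega=\sqrt{\beta/(1-\beta)}$, the closed forms $g_0(\nu)=\nu(1-\nu)$, $g_1(\nu)=\tfrac34\nu-\tfrac12\nu^3$, $g_3(\nu)=\tfrac12\nu$, and the reduction of the $j=1,\ \xi_0=\delta$ term to \eqref{eq:bd3b} are all as in the paper. The hard part, as you correctly identify, is proving that $(j,\xi_0)=(1,\delta)$ attains the maximum in \eqref{eq:thmain}.

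There is a genuine gap in your plan for this step. Checking that $j=1$ dominates $j=0$ and $j=3$ \emph{at} $\xi_0=\delta$, together with monotonicity of the $j=1$ objective in $\xi_0$, only gives $\max_{\xi_0}\theta_1(\xi_0)=\theta_1(\delta)\ge\theta_0(\delta),\theta_3(\delta)$. It does not rule out $\theta_0(\xi_0)>\theta_1(\delta)$ for some interior $\xi_0<\delta$. (For $j=3$ the objective collapses to $\xi_0/2$, so its maximum is at $\delta$ and there is no issue; the problem is genuinely with $j=0$.) The paper handles $j=0$ by a completely separate argument: it upper-bounds $\max_{\xi_0}\theta_0(\xi_0)$ explicitly, using the monotonicity of $h(x)/(x(1-x))$ on $(0,1/2]$ to turn the constraint on $\xi_1$ into a linear inequality for $\xi_0 g_0(a_1)+(1-\xi_0)g_0(a_2)$, and then verifies numerically that this upper bound lies below $\theta_1(\delta)$ for all rates.

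Your reading of the footnote is also off. The ``relaxation'' used in the proof of this corollary is not a concavity step like \eqref{eq:slo1} (that one is reserved for Corollary~\ref{th:slope} and would be too lossy to recover \eqref{eq:bd3b}). Rather, the paper replaces $E_\beta(\xi_0)$ by the binomial upper bound $\tfrac12(\log 2+h(\beta)-h(\xi_0))$; this makes the defining equation for $\xi_1$ depend on $\xi_0$ only through elementary functions, so the implicit differentiation you worried about becomes tractable. The monotonicity of the $j=1$ objective is then reduced, after the change of variables $a_1=1-\xi_1/(2\xi_0)$, $a_2=\xi_1/(2(1-\xi_0))$, to a two-variable inequality on the region $0\le a_2\le a_1(1-a_1)$ that is verified numerically. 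So the paper does not bypass the implicit derivative; it simplifies it via the binomial relaxation and then carries it through.
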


\begin{figure}
	\centering
	\includegraphics[width=.4\textwidth]{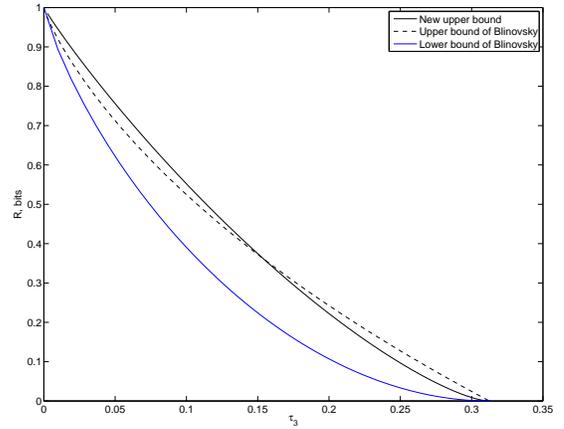}
	
	\caption{Comparison of bounds on $R^*_L(\tau)$ for list size $L=3$}\label{fig:comp}
\end{figure}

\begin{table}
	\caption{Rates for which new bound$^*$ improves state of the art}\label{tab:comp}
	\centering
	\begin{tabular}{l|l}
		List size $L$ & Range of rates\\
		\hline
		$L=3$ & $0  < R \le 0.361$\\
		$L=5$ & $0 < R \le 0.248$\\
		$L=7$ & $0 < R \le 0.184$\\
		$L=9$ & $0 < R \le 0.136$\\
		$L=11$ & $0 < R \le 0.100$
	\end{tabular}\\[5pt]
	$^*$ This is computation of~\eqref{eq:thmain} with $h(\beta)=R$.
	\apxonly{Note: Weaker bound~\eqref{eq:slo1} has much worse rate-limits: $0.115 (L=3), 0.042 (L=5), 0.020
	(L=7)$.}
\end{table}

Another interesting implication of Theorem~\ref{th:main} is that it allows us to settle the question of slope of the
curve $R_L^*(\tau)$ at zero rate. Notice that Blinovsky's converse bound~\eqref{eq:blin} has a negative slope, while his
achievability bound has a zero slope. Our bound always has a zero slope for odd $L$ (but not for even $L$, see
Remark~\ref{rm:evenL} in Section~\ref{sec:slope}):

\begin{corollary}\label{th:slope} Fix arbitrary odd $L\ge 3$. There exists $R_0 = R_0(L) > 0$ such that for all rates $R < R_0$ we have
	\begin{equation}\label{eq:sl0}
			\tau^*_L(R) \le g_1(\delta_{LP1}(R))\,,
	\end{equation}		
	where $g_1(\cdot)$ is a degree-$L$ polynomial defined in~\eqref{eq:gdef}.
		In particular, 
\begin{equation}\label{eq:zeroslope}
	\left.{d\over d\tau}\right|_{\tau=\tau^*_L(0)} R_L^*(\tau) = 0\,,
\end{equation}	
	where the zero-rate radius is $\tau^*_L(0) = {1\over 2} - 2^{-L-1}{L\choose {L-1\over2}}$.
\end{corollary}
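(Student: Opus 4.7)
The plan is to derive the relaxation~\eqref{eq:sl0} as a Jensen-style consequence of Theorem~\ref{th:main}, then convert it into the slope statement~\eqref{eq:zeroslope} via the inverse function theorem.

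Apply Theorem~\ref{th:main} with $\beta=h^{-1}(R)$, so by~\eqref{eq:xi0ineq} one has $\xi_0\in[0,\delta_{LP1}(R)]$. The defining relations for $a=1-\xi_1/(2\xi_0)$ and $b=\xi_1/(2(1-\xi_0))$ yield the key convex-combination identity $\xi_0 a+(1-\xi_0)b=\xi_0$. Provided each $g_j$ in~\eqref{eq:gdef} is concave on $[0,1]$, Jensen's inequality gives
\[\xi_0 g_j(a)+(1-\xi_0)g_j(b)\le g_j(\xi_0),\]
so the RHS of~\eqref{eq:thmain} is bounded by $\max_{j,\,\xi_0\in[0,\delta_{LP1}(R)]}g_j(\xi_0)$. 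Concavity of $g_j$ follows from differentiating~\eqref{eq:gdef} twice: using the identity $\frac{d}{d\nu}P(W'\ge k)=(L-1)\binom{L-2}{k-1}\nu^{k-1}(1-\nu)^{L-1-k}$ with $W'\sim\mathrm{Bino}(L-1,\nu)$, one obtains $g_j''(\nu)\propto -\nu^{(L+j-2)/2}(1-\nu)^{(L-j-2)/2}\le 0$ for $j$ odd, with a small parity-dependent adjustment for $j=0$ (an extra contribution at $W'=(L-1)/2$) giving the same sign.

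The same differentiation yields $g_j'(\nu)=\frac{L}{L+j}(1-2P(W'\ge(L+j)/2))\ge 0$ on $[0,1/2]$ for every feasible $j$ and odd $L$, so each $g_j$ attains its maximum over $\xi_0\in[0,\delta_{LP1}(R)]\subset[0,1/2]$ at $\xi_0=\delta_{LP1}(R)$. A direct binomial computation gives $g_1(1/2)=\tfrac12-2^{-L-1}\binom{L}{(L-1)/2}=\tau^*_L(0)$, which strictly exceeds every other $g_j(1/2)$ for $j\ne 1$ in the feasible set; for instance $g_0(1/2)=\tfrac12-2^{-L}\binom{L-1}{(L-1)/2}$, and $\binom{L}{(L-1)/2}=\binom{L-1}{(L-3)/2}+\binom{L-1}{(L-1)/2}<2\binom{L-1}{(L-1)/2}$ yields the strict inequality. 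By continuity in $R$, there exists a threshold $R_0>0$ throughout which $g_j(\delta_{LP1}(R))<g_1(\delta_{LP1}(R))$ for every $j\ne 1$, establishing~\eqref{eq:sl0}. For the slope, compute $\tfrac{d}{dR}g_1(\delta_{LP1}(R))\big|_{R=0}$ by the chain rule: $g_1'(1/2)=\frac{L}{L+1}\binom{L-1}{(L-1)/2}2^{-(L-1)}$ is a positive constant, whereas writing $R=h(\beta)$ gives $d\beta/dR=1/h'(\beta)\sim 1/\log(1/\beta)\to 0$ and $d\delta_{LP1}/d\beta=-(1-2\beta)/(2\sqrt{\beta(1-\beta)})\sim -1/(2\sqrt\beta)$, so $\delta_{LP1}'(R)\sim -1/(2\sqrt{\beta}\log(1/\beta))\to -\infty$ as $R\to 0^+$. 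Hence $(g_1\circ\delta_{LP1})^{-1}$ has slope $0$ at $\tau=\tau^*_L(0)$, and since $R^*_L(\tau)\le(g_1\circ\delta_{LP1})^{-1}(\tau)$ with both sides vanishing at $\tau^*_L(0)$, the inverse function theorem yields~\eqref{eq:zeroslope}.

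The hardest step is the uniform cross-$j$ comparison. For $j\ge 3$ one expects $g_j\le g_1$ pointwise (larger threshold $L+j$ shrinks both the tail expectation and the prefactor), provable by a direct coupling or term-by-term expansion. The case $j=0$ is the most delicate: $g_0$ can exceed $g_1$ for small $\nu$ (e.g.\ $g_0'(0)=1>L/(L+1)=g_1'(0)$), so only the strict separation at $\nu=1/2$ combined with compactness of $[0,\delta_{LP1}(R)]$ saves the argument, which is precisely what produces the threshold $R_0>0$ and explains (as noted in the paper) why the analogous claim fails for even $L$.
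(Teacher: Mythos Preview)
Your approach mirrors the paper's proof almost exactly: both apply Theorem~\ref{th:main} with $h(\beta)=R$, invoke concavity of the $g_j$ (the paper's Lemma~\ref{th:f1}) together with the convex-combination identity $\xi_0 a+(1-\xi_0)b=\xi_0$ to relax~\eqref{eq:thmain} to $\max_{j,\xi_0} g_j(\xi_0)$, use monotonicity of each $g_j$ on $[0,1/2]$ to place the maximum at $\xi_0=\delta_{LP1}(R)$, and then argue that $g_1$ dominates the other $g_j$ in a neighborhood of $1/2$. Your slope derivation via the chain rule and inverse-function argument is a repackaging of the paper's direct asymptotics~\eqref{eq:sm1} but equivalent.

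The one real gap is the inequality $g_j(1/2)<g_1(1/2)$ for odd $j\ge 3$, which you correctly flag as ``the hardest step'' but leave as a sketch (``provable by a direct coupling or term-by-term expansion''). This is not quite routine. The paper (Lemma~\ref{th:n}) handles it by first bounding $g_1(x)-g_j(x)\ge\frac{j-1}{L+j}\bigl(g_1(x)-\PP[W_x>a+1]\bigr)$ for $L=2a+1$, and then reducing $\PP[W_{1/2}>a+1]<g_1(1/2)$ to the binomial inequality
\[
\sum_{u=0}^{a-1}(2a+1-2u)\binom{2a+1}{u}<(2a+1)\binom{2a+1}{a},
\]
proved via the telescoping identity $\sum_{u\le\ell}(n-2u)\binom{n}{u}=n\binom{n-1}{\ell}$. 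Your Pascal-identity argument for the $j=0$ case is correct and in fact slightly cleaner than the paper's route through~\eqref{eq:n1b}.
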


Before closing our discussion we make some additional remarks:
\begin{enumerate}
	\item The bound in Theorem~\ref{th:main} can be slightly improved by replacing $\delta_{LP1}(R)$, that appears
	in the right-hand side of~\eqref{eq:xi0ineq}, with a 
	better bound, a so-called second linear-programming bound $\delta_{LP2}(R)$ from~\cite{MRRW77}. This would
	enforce the usage of the more advanced estimate of Litsyn~\cite[Theorem 5]{litsyn1999new} and complicate analysis
	significantly. Notice that $\delta_{LP2}(R)\neq\delta_{LP1}(R)$ only for rates $R\ge 0.305$. 
	If we focus attention only on rates where new bound is better than Blinovsky's, 
	such a strengthening only affects the case of $L=3$ and results in a rather minuscule improvement (for example,
	for rate $R=0.33$ the improvement is $\approx 3\cdot 10^{-5}$). 

	\item For even $L$ it appears that $h(\beta)=R$ is no longer optimal. However, the resulting bound does not
	appear to improve upon Blinovsky's.

	\item When $L$ is large (e.g. $35$) the maximum in~\eqref{eq:thmain} is not always attained by either $j=1$ or
	$\xi_0=\delta_{LP1}(R)$. It is not clear whether such anomalies only happen in the region of rates where our
	bound is inferior to Blinovsky's.

	\item The result of Corollary~\ref{th:slope} follows by weakening~\eqref{eq:thmain} (via concavity of $g_j$, 
	Lemma~\ref{th:f1}) to
			\begin{equation}\label{eq:slo1}
				\limsup_{n\to\infty} \tau_{L}(\matc_n) \le \max_{j,\xi_0} g_{j}\left(\xi_0\right) = \max_j
			g_j(\delta_{LP1}(R))\,. 
			\end{equation}			
		The $R<R_0(L)$ condition is only used to show that the maximum is attained at $j=1$. Note also that
		weakening~\eqref{eq:slo1} corresponds to omitting the extra Elias-Bassalygo type reduction, which is
		responsible for the extra optimization over $\xi_1$ in~\eqref{eq:thmain}.
\end{enumerate}

Finally, at the invitation of anonymous reviewer we give our intuition about why our bound outperforms Blinovsky's for
odd $L$. It is easiest to compare with the weakening~\eqref{eq:slo1} of our bound. 
Now compare the two proofs:
\begin{enumerate} 
	\item Blinovsky~\cite{blinovsky1986bounds} first uses Elias-Bassalygo reduction to restrict attention to a
	subcode $\matc'$ situated on a Hamming sphere of radius $\approx \delta_{GV}(R)=h^{-1}(1-R)$. Then he proves an
	upper bound for $\tau_L(\matc')$ valid as long as $|\matc'|\gg 1$ via a Plotkin-type argument together with a
	great symmetrization idea.
	\item Our bound (following Ashikhmin, Barg and Litsyn~\cite{ashikhmin2000new}) instead uses a Kalai-Linial~\cite{KL95} reduction to select a subcode $\matc''$ situated on a Hamming
	sphere of radius $\approx \delta_{LP1}(R)$. We then proceeded to prove a (Plotkin-type) upper bound 
	on a strange quantity:
	$$ \tau_L^o(\matc'') = {1\over n} \left( \min\left\{ \rad(\{0\} \cup S): S\in {\matc \choose L}\right\} - 1
		\right)\,, $$
	which corresponds to a requirement that the code contain not more than $L-1$ codewords in any ball of radius
	$\tau_L^o$, but only for those balls that happen to also contain the origin.
\end{enumerate}
Notice that the sphere returned by Kalai-Linial is bigger than that of Elias-Bassalygo (which is the reason our bound
deteriorates at large rates), but the good thing is that the subcode $\matc''$ has another codeword $c_0$ at the center
of the Hamming sphere. Now, intuitively $\tau_L^o$ is roughly equivalent to $\tau_{L-1}$. The zero-rate (Plotkin) radius
for a list-$L$ decoding of binary codes on Hamming sphere $S_{\xi n}^n$ is given by
$$ p_L(\xi) = {\EE[\min(W_\xi, L+1-W_\xi)]\over L+1} \,, W_\xi \sim \mathrm{Bino}(L+1,\xi)\,.$$
So intuitively, we expect that Blinovsky's bound should give
	$$ \tau_L^*(R) \simleq p_{L}(\delta_{GV}(R)) $$
while our bound should give
	$$ \tau_L^*(R) \simleq p_{L-1}(\delta_{LP1}(R))\,.$$
Finally, it is easy to check that for even $L$ we have $p_{L}=p_{L-1}$, while for odd $L$, $p_L>p_{L-1}$. This is the
main intuitive reason why our bound succeeds in improving Blinovsky's, but only for odd $L$.

\section{Proofs}
\subsection{Proof of Theorem~\ref{th:main}}

Consider an arbitrary sequence of codes $\matc_n$ of rate $R$. As in~\cite{ashikhmin2000new} we start by using Delsarte's linear
programming to select a large component of the
distance distribution of the code. Namely, we apply result of Kalai and Linial~\cite[Proposition 3.2]{KL95}: 
For every $\beta$ with $h(\beta)\le R$
there exists
a sequence $\epsilon_n \to 0$ such that for every code $\matc$ of rate $R$ there is a $\xi_0$
satisfying~\eqref{eq:xi0ineq}  such that
\begin{multline}\label{eq:lp}
	A_{\xi_0 n}(\matc) \eqdef {1\over |\matc|} \sum_{x,x'\in \matc} 1\{|x-x'|=\xi_0 n\} \\
		\ge \exp\{n (R+h(\beta) - 2E_\beta(\xi_0) + \epsilon_n)\}\,.
\end{multline}

Without loss of generality (by compactness of the interval $[0,1/2-\sqrt{\beta (1-\beta)}]$ and passing to a proper
subsequence of codes $\matc_{n_k}$) we may assume that
$\xi_0$ selected in~\eqref{eq:lp} is the same for all blocklengths $n$. Then
there is a sequence of subcodes $\matc'_n$ of asymptotic rate 
	$$ R' \ge R+h(\beta) - 2E_\beta(\xi_0) $$
such that each $\matc'_n$ is situated on a sphere $c_0 + S_{\xi_0}$ surrounding another codeword $c_0 \in \matc$. Our key geometric result is: If there are too many codewords on a sphere $c_0+S_{\xi_0}$ then it is
possible to find $L$ of them that are includable in a small ball that also contains $c_0$. Precisely, we have:

\begin{lemma}\label{th:john} 
Fix $\xi_0 \in (0,1)$ and positive integer $L$. There exist a sequence $\epsilon_n\to0$ such that
for any code $\matc'_n \subset S_{\xi_0 n}$ of rate $R'>0$ there exist $L$ codewords $c_1, \ldots, c_L \in \matc'_n$ such that
\begin{equation}\label{eq:john}
	{1\over n} \rad(0, c_1, \ldots, c_L) \le \theta(\xi_0, R', L) + \epsilon_n\,,
\end{equation}
where 
\begin{align}\label{eq:thetadef} \theta(\xi_0, R', L) &\eqdef \max_{j} \theta_j(\xi_0, R', L)\\
		\theta_j(\xi_0, R', L) &\eqdef \xi_0 g_{j}\left(1-{\xi_1\over 2\xi_0}\right) 
			+ (1-\xi_0) g_j\left(\xi_1\over 2(1-\xi_0)\right)\,, \label{eq:thetajdef}
\end{align}			
	with $\xi_1=\xi_1(\xi_0)$ found as unique solution on interval $[0, 2\xi_0(1-\xi_0)]$ of
\begin{equation}\label{eq:xi1def}
	R' = h(\xi_0) - \xi_0 h\left(\xi_1\over 2\xi_0\right) - (1-\xi_0) h\left(\xi_1\over 2(1-\xi_0)\right)\,,
\end{equation}	
functions $g_j$ are defined in~\eqref{eq:gdef} and $j$ in maximization~\eqref{eq:thetadef} ranging over the same set as
in Theorem~\ref{th:main}.
\end{lemma}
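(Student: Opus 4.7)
My plan is to combine two classical devices: an Elias--Bassalygo-type packing reduction, which clusters $\matc'_n$ near a single reference codeword $c^*$, and Blinovsky's Ramsey-theoretic symmetrization, which extracts $L$ codewords whose column-profile is permutation-symmetric. Once both reductions are in place, the coordinates split naturally by $c^*$ into two blocks and the Chebyshev radius of $\{0,c_1,\ldots,c_L\}$ is bounded by a Plotkin-type computation in each block.

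First I would execute the Elias--Bassalygo step: for $\matc'_n\subset S_{\xi_0 n}$ of positive rate $R'$, I average over uniform $c^*\in S_{\xi_0 n}$ the number of codewords of $\matc'_n$ at Hamming distance exactly $\xi_1 n$ from $c^*$. The fraction of $S_{\xi_0 n}$ at that relative distance from a fixed sphere point is $\exp\{n[\xi_0 h(\xi_1/(2\xi_0))+(1-\xi_0) h(\xi_1/(2(1-\xi_0)))-h(\xi_0)+o(1)]\}$, so the expected count is exponentially large in $n$ whenever $R'$ exceeds the RHS of \eqref{eq:xi1def}. Taking $\xi_1$ to be the unique solution of \eqref{eq:xi1def}, together with a vanishing slack to guarantee exponential growth, I obtain $c^*\in S_{\xi_0 n}$ and a subcode $\matc''_n:=\matc'_n\cap\{c:|c-c^*|=\xi_1 n\}$ whose cardinality tends to infinity.

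Next I would apply Blinovsky's Ramsey symmetrization \cite{blinovsky1986bounds} to $\matc''_n$: for fixed $L$, once $|\matc''_n|$ is large enough, one obtains $L$ codewords $c_1,\ldots,c_L\in\matc''_n$ and a type profile $(q_t)_{t=0}^L$ summing to $1$ such that, up to $o(n)$, for each subset $T\subset[L]$ of size $t$ the number of columns of $[c_1|\cdots|c_L]$ equal to the characteristic vector of $T$ is $q_t n/\binom{L}{t}$. A further application of Ramsey inside each of the two coordinate blocks defined by $c^*$ lets me assume the permutation-symmetry holds separately on the $\xi_0 n$ one-columns (where $c^*_k=1$) and on the $(1-\xi_0)n$ zero-columns (where $c^*_k=0$); the weight constraint $|c_i|=\xi_0 n$ combined with $|c_i-c^*|=\xi_1 n$ then forces the per-block densities to be $\nu_0=1-\xi_1/(2\xi_0)$ and $\nu_1=\xi_1/(2(1-\xi_0))$, respectively.

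Finally I would bound $\rad(\{0,c_1,\ldots,c_L\})$ using additivity of Hamming distance across the two blocks together with per-block symmetry. Choosing $y$ symmetrically (so $|y-c_i|$ depends on $i\ge 1$ only through the profile) reduces the problem to an additive per-block minimization: on each block, adjoining $j$ formal copies of the zero vector and setting $y_k$ per-column by majority over the extended set of $L+j$ vectors gives a per-column disagreement count of $\min(t_k,L+j-t_k)$, whose average against the profile bounds the block's Chebyshev radius by $g_j(\nu)$ times the block length. Summing the two block contributions and maximizing over admissible $j$ (those of parity such that the majority threshold $(L+j)/2$ is unambiguous, matching the theorem's set) yields $\theta(\xi_0,R',L)\,n+o(n)$. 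The main obstacle is precisely this last step: rigorously relating the per-block Chebyshev radius of a symmetric configuration of arbitrary profile to the binomial-reference $g_j(\nu)$, and identifying the admissible $j$ from the balance equation $|y|=|y-c_i|$ together with the parity restriction on $(L+j)/2$ forced by $L$; a secondary bookkeeping obstacle is to verify that the iterated Ramsey extraction preserves the unboundedness of the subcode cardinality, which is ultimately why the rate tradeoff in \eqref{eq:xi1def} is required with equality.
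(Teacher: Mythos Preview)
Your architecture matches the paper's: Elias--Bassalygo onto the intersection of $S_{\xi_0 n}$ with a second sphere, then Blinovsky's Ramsey symmetrization, then a radius computation. The obstacle you flag is genuine, however, and your proposal does not contain the idea needed to resolve it. After Ramsey hands you $L$ codewords with an (approximately) permutation-symmetric column profile $(q_t)$, only the first moment $\nu$ is pinned down by the two-sphere constraint; the average-radius functional $q \mapsto \frac{1}{L+j}\sum_t q_t \min(t, L+j-t)$ is \emph{linear} in $q$, so over symmetric profiles of fixed density it is not bounded above by its value at the $\mathrm{Bino}(L,\nu)$ profile, which is $g_j(\nu)$. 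There is no direct inequality here, and your ``further application of Ramsey inside each coordinate block'' still only fixes the per-block density, not the per-block weight distribution.

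The paper fills this gap with two additional ingredients you are missing. First (Lemma~\ref{th:e1}), it retains a Ramsey subcode $\matc'''$ of size $K\to\infty$ rather than merely $L$ codewords, and shows via a hypergeometric-to-binomial estimate that the \emph{average} joint type $\bar T_L(\matc''')$ is within $O(1/K)$ of the mixture $\frac{1}{n}\sum_i \mathrm{Bern}^{\otimes L}(\lambda_i)$ over column densities $\lambda_i$; the binomial structure arises from averaging over a large code, not from any single $L$-tuple, and Ramsey (Lemma~\ref{th:j0}) then transfers this average type to every $L$-subset. Second (Lemma~\ref{th:f1}), the map $\nu\mapsto g_j(\nu)$ is concave, so by Jensen the mixture over the $\lambda_i$ in each block collapses to $g_j$ at the block-average density, yielding exactly $\theta_j$. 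Separately, your handling of the $\max_j$ conflates the Chebyshev radius with the weighted average radius $h_j$: adjoining $j$ zeros and taking majority controls the sum $j|y|+\sum_{i\ge 1}|c_i-y|$, not $\max(|y|,|c_1-y|)$. The paper (Lemma~\ref{th:g2}) instead writes $\rad(0,c_1,\ldots,c_L)\le\min_q\max_p\sum_i p_i|c_i-y(q)|$, uses symmetry of the type to reduce $p$ to a single parameter $p_1\in[0,1/L]$, applies von Neumann minimax, and reads off the breakpoints $p_1=1/(L+j)$ of the resulting piecewise-linear function; this, not a balance equation $|y|=|y-c_i|$, is what produces both the $\max_j$ and the parity restriction on $j$.
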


Equipped with Lemma~\ref{th:john} we immediately conclude that 
\begin{equation}\label{eq:ed_3}
	\limsup_{n\to\infty} \tau_{L}(\matc_n) \le \max_{\xi_0\in[0,\delta]} \theta(\xi_0, R+h(\beta)-2E_\beta(\xi_0), L)\,.
\end{equation}
Clearly,~\eqref{eq:ed_3} coincides with~\eqref{eq:thmain}. So it suffices to prove Lemma~\ref{th:john}.

\subsection{Proof of Lemma~\ref{th:john}}

Let $\matt_L$ be the $(2^L-1)$-dimensional space of probability distributions on $\FF_2^L$. If $T\in \matt_L$ then we have
$$ T=(t_{v}, v\in\FF_2^L) \qquad t_v \ge 0, \sum_{v} t_v = 1\,.$$
We define distance on $\matt_L$ to be the $L_\infty$ one:
$$ \|T - T'\|  \eqdef \max_{v\in\FF_2^L} |t_v - t'_v|\,.$$
Permutation group $S_L$ acts naturally on $\FF_2^L$ and this action descends to
probability distributions $\matt_L$. 
%
We will say that $T$ is symmetric if
$$ T=\sigma(T) \quad \iff \quad t_v = t_{\sigma(v)} \quad \forall v \in \FF_2^L $$
for any permutation $\sigma: [L]\to[L]$. 
Note that symmetric $T$ is completely specified by $L+1$ numbers (weights of Hamming spheres in $\FF_2^L$):
	$$\sum_{v: |v|=j} t_v\,, \qquad j = 0, \ldots, L\,. $$

Next, fix some total ordering of $\FF_2^n$ (for example, lexicographic). Given a subset $S\subset \FF_2^n$ we will say
that $S$ is given in ordered form if $S=\{x_1, \ldots, x_{|S|}\}$ and $x_1<x_2 \cdots < x_{|S|}$ under the fixed ordering on $\FF_2^n$.
For any subset of codewords $S=\{x_1,
\ldots, x_L\}$ given in ordered form we define its \textit{joint type
$T(S)$} as an element of $\matt_L$ with
$$ t_v \eqdef {1\over n} |\{j: x_1(j) = v_1, \ldots, x_L(j) = v_j\}|\,,$$
where here and below $y(j)$ denotes the $j$-th coordinate of binary vector $y\in\FF_2^n$. In this way every subset $S$
is associated to an element of $\matt_L$. Note that $T(S)$ is symmetric if and only if the $L\times n$ binary matrix
representing $S$ (by combining row-vectors $x_j$) has the property that the number of columns equal to $[1, 0,\ldots,
0]^T$ is the same as the number of columns $[0, 1, \ldots, 0]^T$ etc. 
%
For any code $\matc \subset \FF_2^n$ we define its average joint type:
$$ \bar T_L(\matc) = {1\over L! \cdot {|\matc| \choose L}} \sum_\sigma \sum_{ S\in {\matc\choose L}} \sigma(T(S))\,.$$
Evidently, $\bar T_L(\matc)$ is symmetric.

Our proof crucially depends on a (slight extension of the) brilliant idea of Blinovsky~\cite{blinovsky1986bounds}:

\begin{lemma}\label{th:j0} For every $L\ge1$, $K\ge L$ and $\delta>0$ there exist a constant $K_1=K_1(L,K,\delta)$ such that for all
$n\ge 1$ and all codes
$\matc\subset \FF_2^n$ of size $|\matc|\ge K_1$ there exists a subcode $\matc' \subset \matc$ of size at least $K$ such that for any $S \in {\matc' \choose L}$ we have
\begin{equation}\label{eq:j0}
	\|T(S) - \bar T_L(\matc')\| \le \delta\,. 
\end{equation}
\end{lemma}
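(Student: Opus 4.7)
The plan is to combine compactness of the joint-type space with iterated applications of hypergraph Ramsey theory and a marginal-consistency argument.

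I would first fix a small $\epsilon > 0$ (to be chosen as a function of $\delta$ and $L$) and, for each $L' \in \{1, \ldots, L\}$, cover the compact set $\matt_{L'} \subset \mathbb{R}^{2^{L'}}$ by finitely many $L_\infty$-balls of radius $\epsilon$; let $N_{L'}$ denote the number of balls. Color each $L'$-subset $S$ of the code by the index of a ball containing its joint type $T(S)$. Iterating the Erd\H{o}s--Rado hypergraph Ramsey theorem for $L' = L, L-1, \ldots, 1$ (each refinement a subset of the previous, thereby preserving monochromaticity at higher orders) yields nested subcodes $\matc \supset \matc_L \supset \cdots \supset \matc_1 =: \matc'$ with $|\matc'| \ge K$ whenever $|\matc|$ exceeds a tower-type constant $K_1 = K_1(L, K, \delta)$. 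In $\matc'$, for every $L' \le L$ and every $L'$-subset $S'$, the joint type $T(S')$ lies within $\epsilon$ (in $L_\infty$) of a common reference $T_0^{(L')} \in \matt_{L'}$.

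The key step is then to upgrade this to approximate $S_L$-symmetry of $T_0^{(L)}$, which I expect to be the main obstacle. For a sorted $L$-subset $\{x_1 < \cdots < x_L\} \subset \matc'$, the $(L-1)$-marginal of $T(\{x_1, \ldots, x_L\})$ obtained by summing out coordinate $k$ coincides with the joint type of $\{x_1, \ldots, \hat{x}_k, \ldots, x_L\}$, which in $\matc'$ is within $\epsilon$ of $T_0^{(L-1)}$ irrespective of $k$. Equating the marginals from two distinct drops $k_1 \ne k_2$ gives, at each residual index, a relation of the form $T_0^{(L)}(v) = T_0^{(L)}(\tau_{k_1 k_2} v) + O(\epsilon)$, where $\tau_{k_1 k_2}$ denotes the coordinate transposition. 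Running over all pairs and using that transpositions generate $S_L$, one concludes $\|T_0^{(L)} - (T_0^{(L)})^{\mathrm{sym}}\| \le O(\epsilon)$.

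Finally, averaging over $\sigma \in S_L$ and $S' \in \binom{\matc'}{L}$ shows that $\bar T_L(\matc')$ lies within $\epsilon$ of $(T_0^{(L)})^{\mathrm{sym}}$, which by the previous step is within $O(\epsilon)$ of $T_0^{(L)}$. Combining with $\|T(S) - T_0^{(L)}\| \le \epsilon$ via the triangle inequality yields $\|T(S) - \bar T_L(\matc')\| \le O(\epsilon)$ uniformly over $S \in \binom{\matc'}{L}$; picking $\epsilon$ small enough relative to $\delta$ then gives the stated bound. The iterative Ramsey setup is routine bookkeeping, but the extraction of approximate symmetry from marginal consistency across all $S_L$-orbits on $\FF_2^L$ is where the essential combinatorial content lies.
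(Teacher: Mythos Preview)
Your proposal is correct and follows essentially the same approach as the paper: iterated hypergraph Ramsey over an $\epsilon$-net of joint types to make all $L'$-subset types nearly constant for each $L'\le L$, then the marginal-consistency/transposition argument (dropping $c_{k}$ versus $c_{k+1}$ and cancelling the common term) to deduce approximate $S_L$-symmetry, followed by averaging and the triangle inequality. The paper carries out exactly these steps, with the drop-$c_{L-1}$ vs.\ drop-$c_L$ computation as the crux; the only cosmetic difference is that you phrase everything relative to reference types $T_0^{(L')}$ while the paper compares pairs $T(S),T(S')$ directly.
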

\begin{remark} Note that if $S' \subset S$ then every element of $T(S')$ is a sum of $\le 2^L$ elements of $T(S)$.
Hence, joint types $T(S')$ are approximately symmetric also for smaller subsets $|S'| < L$.
\end{remark}
\begin{proof} 
We first will show that for any $\delta_1>0$ and sufficiently large $|\matc|$ we may select a subcode $\matc'$ so that the following holds:
For any pair of subsets $S, S' \subset \matc'$ s.t. $|S|=|S'| \le L$ we have:
	\begin{equation}\label{eq:j1}
		\|T(S) - T(S')\| \le \delta_1
\end{equation}	

Consider any code $\matc_1 \subset \FF_2^n$ and define a hypergraph with vertices indexed by elements of
$\matc$ and hyper-edges corresponding to each of the subsets of size $L$. Now define a $\delta_1/2$-net on
the space $\matt_L$ and label each edge according to the closest element of the $\delta_1/2$-net. By a theorem of Ramsey
there exists $K_L$ such that if $|\matc_1|\ge K_L$ then there is a subset $\matc_1'\subset \matc$ such that $|\matc'_1|\ge
K$ and each of the internal edges, indexed by ${\matc_1' \choose L}$, is assigned the same label. Thus, by triangle
inequality~\eqref{eq:j1} follows for all $S, S'\in {\matc_1' \choose L}$.

Next, apply the previous argument to show that there is a constant $K_{L-1}$ such that 
for any $\matc_2 \subset \FF_2^n$ of size $|\matc_2| \ge K_{L-1}$ there exists a subcode $\matc_2'$ of size
$|\matc_2'|\ge K_L$ satisfying~\eqref{eq:j1} for all $S, S' \in {\matc_2' \choose L-1}$. Since $\matc_2'$ satisfies
the size assumption on $\matc_1$ made in previous paragraph, we can select a further subcode $\matc_2'' \subset
\matc_2'$ of size $\ge K_L$ so that for $\matc_2''$ property~\eqref{eq:j1} holds for all $S,S'$ of size $L$ or $L-1$. 

Continuing similarly, we may select a subcode $\matc'$ of arbitrary $\matc$ such that~\eqref{eq:j1} holds for all
$|S|=|S'|\le L$ provided that $|\matc|\ge K_1$.

Next, we show that~\eqref{eq:j1} implies
\begin{equation}\label{eq:j2}
	\|T(S_0) - \sigma(T(S_0))\| \le C \delta_1\,, 
\end{equation}
where $S_0 \in {\matc'\choose L}$ is arbitrary and $C=C(L)$ is a constant depending on $L$ only. \apxonly{Another way to
prove symmetry was used in~\cite[Lemma 1]{blinovsky2005code}.}

Now to prove~\eqref{eq:j2} let $T(S_0) = \{t_v, v\in \FF_2^L\}$ and consider an arbitrary transposition
$\sigma:[L]\to[L]$. It will be clear that our proof does not depend on what transposition is chosen, so for simplicity we 
take $\sigma = \{(L-1) \leftrightarrow L\}$. We want to show that~\eqref{eq:j1} implies
\begin{equation}\label{eq:j3}
	|t_v - t_{\sigma(v)}| \le \delta_1\,. \qquad \forall v\in\FF_2^L
\end{equation}
Since transpositions generate permutation group $S_L$,~\eqref{eq:j2} then follows. Notice that~\eqref{eq:j3} is only
informative for $v$ whose last two digits are not equal, say $v=[v_0, 0, 1]$. 
Suppose that $S_0 = \{c_1, \ldots, c_L\}$ given in the ordered form. Let
\begin{align} S &= \{c_1, \ldots c_{L-1}\}\,,\\
   S' &= \{c_1, \ldots, c_{L-2}, c_L\}
\end{align}
Joint types $T(S)$ and $T(S')$ are expressible as functions of $T(S_0)$ in particular, the number of 
occurrences of element $[v_0, 0]$ in $S$ is
$ t_{[v_0, 0, 1]} + t_{[v_0, 0, 0]} $
and in $S'$ is $t_{[v_0, 0, 0]} + t_{[v_0, 1, 0]}$. Thus, from~\eqref{eq:j1} we obtain:
$$ | (t_{[v_0, 0, 1]} + t_{[v_0, 0, 0]}) - (t_{[v_0, 0, 0]} + t_{[v_0, 1, 0]})| \le \delta $$
implying~\eqref{eq:j3} and thus~\eqref{eq:j2}.

Finally, we show that~\eqref{eq:j2} implies~\eqref{eq:j0}. Indeed, consider the chain
\begin{align} \lefteqn{\|T(S) - \bar T_L(\matc')\|} \nonumber\\
{} &= \left\|T(S) - {1\over L! \cdot {|\matc'| \choose L}} \sum_\sigma \sum_{ S'\in
{\matc'\choose L}} \sigma(T(S'))\right\| \\
		&\le {1\over L! \cdot {|\matc'| \choose L}} \sum_\sigma \sum_{ S'\in
{\matc'\choose L}}\|T(S) -  \sigma(T(S'))\| \label{eq:j3a}\\
		&\le {1\over L! \cdot {|\matc'| \choose L}} \sum_\sigma \sum_{ S'\in
{\matc'\choose L}} \|T(S) -  T(S')\| \nonumber\\
{} & {}+\|T(S') - \sigma(T(S'))\|\label{eq:j3b}\\
		&\le (1+C) \delta_1\,, \label{eq:j3c}
\end{align}		
where~\eqref{eq:j3a} is by convexity of the norm,~\eqref{eq:j3b} is by triangle inequality and~\eqref{eq:j3c} is
by~\eqref{eq:j1} and~\eqref{eq:j2}.
Consequently, setting $\delta_1 = {\delta\over 1+C}$ we have shown~\eqref{eq:j0}.
\end{proof}

\smallskip
Before proceeding further we need to define the concept of an average radius (or a moment of inertia):
$$\arad(x_1, \ldots, x_m) \eqdef \min_y {1\over m} \sum_{i=1}^m |x_i -y |\,. $$
Note that the minimizing $y$ can be computed via a per-coordinate majority vote (with arbitrary tie-breaking for even $m$).
Consider now an arbitrary subset $S=\{c_1,\ldots, c_L\}$ and define for each $j\ge 0$ the following functions
$$ h_j(S) \eqdef {1\over n} \arad(\underbrace{0,\ldots,0}_{j\mbox{~times}}, c_1, \ldots, c_L)\,. $$
It is easy to find an expression for $h_j(S)$ in terms of the joint-type of $S$:
\begin{align}\label{eq:g1}
	h_j(S) &= {1\over L+j} \left( \EE[W] - \EE[|2W-L-j|^+] \right)\\
	\PP[W=w] &= \sum_{v: |v|=w} t_v\,,
\end{align}
where $t_v$ are components of the joint-type $T(S) = \{t_v, v\in \FF_2^L\}$. To check~\eqref{eq:g1} simply observe that
if one arranges $L$ codewords of $S$ in an $L\times n$ matrix and also adds $j$ rows of zeros, then computation of
$h_j(S)$ can be done per-column: each column of weight $w$ contributes
$$ \min(w, L+j-w) = w-|2w-L-j|^+ $$
to the sum. In view of expression~\eqref{eq:g1} we will abuse notation and write
$$ h_j(T(S)) \eqdef h_j(S)\,.$$

We now observe that for symmetric codes satisfying~\eqref{eq:j0}  average-radii $h_j(S)$ in fact determine the regular
radius:
\begin{lemma}\label{th:g2} Consider an arbitrary code $\matc$ satisfying conclusion~\eqref{eq:j0} of Lemma~\ref{th:j0}.
Then for any subset  $S = \{c_1,\ldots,c_L\} \subset \matc$ we have
\begin{equation}\label{eq:g2}
	\left| \rad(0, c_1, \ldots, c_L) - n \cdot \max_j h_j(\bar T_L(\matc))\right| \le 2^L (1+\delta n)\,,
\end{equation}
where $j$ in maximization~\eqref{eq:g2} ranges over $\{0, 1, 3, \ldots, 2k+1, \ldots, L\}$ if $L$ is odd and over $\{0,
	2, \ldots,2k, \ldots L\}$ if $L$ is even.
\end{lemma}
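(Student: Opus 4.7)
The plan is to express $\rad(0,c_1,\ldots,c_L)$ through linear-programming duality, exploit the approximate symmetry of $T(S)\approx \bar T_L(\matc)$ provided by Lemma~\ref{th:j0} to reduce the dual to a one-parameter family, and identify the optima of that family with the quantities $h_j(\bar T_L)$.

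Concretely, the first step is to write, with $c_0=0$,
$$\rad(0,c_1,\ldots,c_L) \;=\; \min_{y \in \{0,1\}^n} \max_{0\le k\le L} |y-c_k|,$$
and relax to $y\in[0,1]^n$. Since $\sum_k p_k |y_i-c_k(i)|$ is linear in each $y_i$, the relaxation is tight integrally for the inner minimization; Sion's minimax then gives
$$\rad_{LP}(S) \;=\; n\max_{p\in\Delta_{L+1}} V_p(T(S)),\qquad V_p(T)\eqdef\sum_{v\in\FF_2^L} t_v\min(q_v,\,1-q_v),$$
with $q_v=\sum_{k=1}^L p_k v_k$. Since $V_p$ is concave in $p$ (because $\min(q,1-q)=\tfrac12-|q-\tfrac12|$ is concave in $q$) and, for symmetric $T$, invariant under permutations of $p_1,\ldots,p_L$, the maximizer for $\bar T_L$ can be taken at the symmetric point $p_0=\alpha$, $p_k=(1-\alpha)/L$; then $q_v=(1-\alpha)|v|/L$ depends only on $w=|v|$. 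Linearity of $V_p$ in $T$ together with $\|T(S)-\bar T_L\|\le\delta$ yields $|\rad_{LP}(S)/n-\max_\alpha V_\alpha(\bar T_L)|\le 2^L\delta$.

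The core combinatorial step then reduces to analyzing
$$V_\alpha(\bar T_L) \;=\; \sum_{w=0}^L q_w^{\mathrm{col}}\min\!\left(\frac{(1-\alpha)w}{L},\,1-\frac{(1-\alpha)w}{L}\right)$$
as a concave, piecewise-linear function of $\alpha\in[0,1]$. Its interior kinks lie precisely where $(1-\alpha)w/L=1/2$ for some integer $w>L/2$, i.e.\ at $\alpha=(2w-L)/(2w)=j/(L+j)$ with $w=(L+j)/2\in\mathbb{Z}$; integrality of $w$ forces $L+j$ to be even, which is exactly the parity constraint in the statement. A direct comparison with (\ref{eq:g1}) identifies $V_\alpha$ at these breakpoints with $h_j(\bar T_L)$, and $V_0$ with $h_0$. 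Hence $\max_{\alpha\in[0,1]} V_\alpha(\bar T_L)=\max_{j\in J} h_j(\bar T_L)$, where $J$ is the index set of the lemma.

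Finally, I would match this LP value integrally by rounding. For $j^*$ attaining the max, set $y_i=1$ on columns with $|v(i)|>(L+j^*)/2$ and $y_i=0$ on columns with $|v(i)|<(L+j^*)/2$; on the tie columns (weight exactly $(L+j^*)/2$) fix an integer fraction $\gamma^*\in[0,1]$ chosen to balance $|y|$ against $|y-c_k|$, distributing the 1's uniformly across the $\binom{L}{(L+j^*)/2}$ types of that weight. Exact symmetry of $\bar T_L$ makes the $L$ distances $|y-c_k|$ equal, up to (i) the type-wise integer rounding (at most $1$ per column type, hence total $O(2^L)$) and (ii) the $\delta$-defect of $T(S)$ from symmetry (contributing $O(\delta 2^L n)$). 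Combined with the trivial $\rad(S)\ge\rad_{LP}(S)$, this yields $|\rad(S)-n\max_j h_j(\bar T_L)|\le O(2^L(1+\delta n))$ as claimed. I expect the main technical nuisance to be the rounding step, specifically verifying that the saddle-enforced $\gamma^*$ lies in $[0,1]$ and bookkeeping how the symmetry defect of $T(S)$ propagates through the per-type uniform distribution of the 1's.
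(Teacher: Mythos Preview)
Your approach is correct and essentially the same as the paper's. Both arguments hinge on the same three ingredients: a minimax/LP-duality computation of the covering radius, the observation that symmetry of $\bar T_L$ collapses the dual variable $p$ to a one-parameter family $p_0=\alpha,\,p_k=(1-\alpha)/L$, and the identification of the breakpoints of the resulting piecewise-linear concave function with the values $h_j(\bar T_L)$ for $j$ in the stated parity-restricted set.

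The packaging differs slightly. The paper first constructs the center $y(q)$ (indexed by per-weight fractions $q_w$), bounds $|c_i-y(q)|$ after replacing $T(S)$ by $\bar T_L$, and then applies von Neumann's minimax to the resulting two-dimensional $\min_q\max_{p_1}$ problem; the lower bound comes directly from $\rad\ge (L+j)\,h_j(S)/1$ (the average-radius inequality) plus Lipschitz continuity of $h_j$ in the type. You instead take the LP relaxation of $\rad$, apply Sion's minimax in full dimension, reduce by symmetry, and then round. These are the same saddle-point computation viewed from two angles: your $V_{p(j)}(T)$ is exactly $h_j(T)$, so your lower bound $\rad\ge\rad_{LP}\ge n\,h_j(T(S))$ coincides with the paper's, and your rounded center (0/1 off the tie weight, fraction $\gamma^*$ on ties, distributed uniformly across types of that weight) is precisely the paper's $y(q^*)$ at the saddle $q^*$. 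Your worry about $\gamma^*\in[0,1]$ is resolved automatically by the existence of a saddle point in $[0,1]^{L+1}\times\Delta$: if the optimal $p_1^*=1/(L+j^*)$ is interior, some $\gamma^*\in[0,1]$ makes the slope in $p_1$ vanish; if $j^*=0$ (boundary), the saddle condition only requires nonnegative slope, which again holds for some $\gamma^*\in[0,1]$, and in that case you do not need exact balance between $|y|$ and $|y-c_k|$ but only $|y|\le|y-c_k|$.
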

\begin{proof} For joint-types of size $L$ and all $j\ge 0$ we clearly have
	(cf. expression~\eqref{eq:g1})
	\begin{equation}\label{eq:g1a}
		|h_j(T_1) - h_j(T_2)| \le 2^{L-1} \|T_1 - T_2\|\,, \qquad \forall T_1,T_2 \in \matt_L\,. 
\end{equation}	
	We also trivially have
	\begin{equation}\label{eq:g1b}
		{1\over n} \rad(0, c_1, \ldots, c_L) \ge h_j (S) \qquad \forall j\ge 0\,.
\end{equation}	
	Thus from~\eqref{eq:j0} and~\eqref{eq:g1a} we already get
		$$ {1\over n}\rad(0, c_1, \ldots, c_L) \ge \max_j h_j(\bar T_L(\matc)) - 2^{L-1} \delta\,.$$
It remains to show
\begin{equation}\label{eq:g3}
	{1\over n}\rad(0, c_1, \ldots, c_L) \le  \max_j h_j(\bar T_L(\matc)) + \delta + {2^L\over n}\,.
\end{equation}
	This evidently requires constructing a good center $y$ for the set $\{0, c_1,\ldots,c_L\}$.
	To that end fix arbitrary numbers $q=(q_0, \ldots, q_L) \in [0,1]^L$. Next, for each
	$v\in\FF_2^L$ let $E_v \subset [n]$ be all coordinates on which restriction of $\{c_1,\ldots, c_L\}$ equals
	$v$. On $E_v$ put $y$ to have a fraction $q_{|v|}$ of ones and remaining set to zeros (rounding to integers arbitrarily). Proceed for all $v\in
	\FF_2^L$. Call resulting vector $y(q)\in \FF_2^n$. 

	Denote for convenience $c_0=0$. We clearly have
	\begin{equation}\label{eq:g4}
		\rad(c_0, c_1, \ldots, c_L) \le \min_q \max_{p} \sum_{i=0}^L p_i |c_i - y(q)|\,,
\end{equation}	
	where $p=(p_0, \ldots, p_L)$ is a probability distribution.  

	Denote 
		\begin{align} T(S)	&= \{t_v, v\in\FF_2^L\}\\
		   \bar T_L(\matc) &= \{\bar t_v, v\in\FF_2^L\} 
	\end{align}
We proceed to computing $|c_i - y(q)|$.
	\begin{multline} 
		|c_i - y(q)| \le n\sum_{v\in\FF_2^L} t_v (q_{|v|} 1\{v(i) = 0\} \\
			+ (1-q_{|v|}) 1\{v(i) = 1\}) + 2^L\,,
	\end{multline}	
	where $2^L$ comes upper-bounding the integer rounding issues and we abuse notation slightly by setting $v(0)=0$
	for all $v$ (recall that $v(i)$ is the $i$-th coordinate of $v\in\FF_2^L$).

	By~\eqref{eq:j0} we may replace $t_v$ with $\bar t_v$ at the expense of introducing $2^L \delta n$ error, so we
	have:
		\begin{multline} |c_i - y(q)| \le n\sum_{v\in\FF_2^L} \bar t_v (q_{|v|} 1\{v(i) = 0\} \\
			+ (1-q_{|v|}) 1\{v(i) = 1\}) + 2^L(1+\delta n)\,.
	\end{multline}		
	Next notice that the sum over $v$ only depends on whether $i=0$ or $i\neq 0$ (by symmetry of $\bar t_v$).
	Furthermore, for any given weight $w$ and $i\neq 0$ we have
		$$ \sum_{v: |v|=w} 1\{v(i)=1\} = {L \choose w}{w\over L}\,.$$
	Thus, introducing the random variable $\bar W$, cf.~\eqref{eq:g1},
	$$ \PP[\bar W = w] \eqdef \sum_{v:|v|=w} \bar t_v\,,$$
	we can rewrite:
	\begin{multline} \sum_{v\in\FF_2^L} \bar t_v (q_{|v|} 1\{v(i) = 0\} + (1-q_{|v|}) 1\{v(i) = 1\}) \\= 
		{1\over L} \EE[\bar W + (L-2\bar W) q_{\bar W}]\,. 
	\end{multline}		
	For $i=0$ the expression is even simpler:
	$$ \sum_{v\in\FF_2^L} \bar t_v (q_{|v|} 1\{v(0) = 0\} + (1-q_{|v|}) 1\{v(0) = 1\}) =
		\EE[q_{\bar W}]\,.$$
	
	Substituting derived upper bound on $|c_i-y(q)|$ into~\eqref{eq:g4} we can see that without loss of generality
	we may assume $p_1=\cdots=p_L$, so our upper bound (modulo $O(\delta)$ terms) becomes:
	\begin{align*} \lefteqn{\min_q \max_{p_1 \in [0, L^{-1}]} (1-Lp_1) \EE[q_{\bar W}] + p_1 \EE[\bar W + (L-2\bar W) q_{\bar
	W}] }\\
		& = \min_q \max_{p_1 \in [0, L^{-1}]} p_1 \EE[\bar W] + \EE[q_{\bar W}(1-2\bar W p_1)] 
\end{align*}		
	By von Neumann's minimax theorem we may interchange min and max, thus continuing as follows:
	\begin{align}
		& = \max_{p_1 \in [0, L^{-1}]} \min_q  p_1 \EE[\bar W] + \EE[q_{\bar W}(1-2\bar W p_1)] \\
		&= \max_{p_1 \in [0, L^{-1}]} p_1 \EE[\bar W] - \EE[|2 \bar W p_1 -1|^+]\,.\label{eq:g5}
	\end{align}
	The optimized function of $p_1$ is piecewise-linear, so optimization can be reduced to comparing values at
	slope-discontinuities and boundaries. The point $p_1 = 0$ is easily excluded, while the rest of the points are
	given by $p_1 = {1\over L+j}$ with $j$ ranging over the set specified in the statement of Lemma\footnote{The
	difference
	between odd and even $L$ occurs due to the boundary point $p_1 = {1\over L}$ not being a
	slope-discontinuity when $L$ is odd, so we needed to add it separately.}. So we continue~\eqref{eq:g5} getting
	\begin{align}
		&= \max_{j} {1\over L+j} \left(\EE[\bar W] - \EE[|2 \bar W - L - j|^+]\right)\label{eq:g6}
	\end{align}
	We can see that expression under maximization is exactly $h_j(\bar T_L(\matc))$ and hence~\eqref{eq:g3} is
	proved.
\end{proof}

\apxonly{A similar method shows that radius and weighted average radius are related:
\begin{align} \rad(c_1, \ldots, c_L) &\le \max_p \arad\left(\begin{matrix}c_1 & \ldots & c_L\\ p_1 & \ldots &
p_L\end{matrix}\right) + 2^L\\
\rad(c_1, \ldots, c_L) &\ge \max_p \arad\left(\begin{matrix}c_1 & \ldots & c_L\\ p_1 & \ldots &
p_L\end{matrix}\right)
\end{align}
}

\begin{lemma}\label{th:e1} There exist constants $C_1, C_2$ depending only on $L$ such that for any $\matc \subset
\FF_2^n$  the joint-type $\bar T_L(\matc)$ is approximately
	a mixture of product Bernoulli distributions\footnote{Distribution $\mathrm{Bern}^{\otimes L}(\lambda)$ assigns
	probability $\lambda^{|v|}(1-\lambda)^{L-|v|}$ to element $v\in\FF_2^L$.}, namely:
	\begin{equation}\label{eq:e1a}
		\left\|\bar T_L(\matc) - {1\over n} \sum_{i=1}^n \mathrm{Bern}^{\otimes L}(\lambda_i)\right\| \le {C_1\over
	|\matc|}\,, 
\end{equation}	
	where $\lambda_i = {1\over |\matc|} 
	\sum_{c\in\matc} 1\{c(i) =1\}$ be the density of ones in the $j$-th column of a $|\matc|\times n$ matrix
	representing the code. 
	 In particular, 
	\begin{equation}\label{eq:e1b}
		\left|h_j(\bar T_L(\matc)) - {1\over n} \sum_j g_j(\lambda_j) \right| \le {C_2\over
		|\matc|}\,,
\end{equation}	
	where functions $g_j$ were defined in~\eqref{eq:gdef}.
\end{lemma}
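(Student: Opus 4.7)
The plan is a direct combinatorial computation. First, I would symmetrize: $\bar T_L(\matc)$ is exactly the expected joint type when $(c_1,\ldots,c_L)$ is a uniformly random ordered $L$-tuple of distinct codewords of $\matc$. Since the $t_v$ coordinate of a joint type is an average over the $n$ columns, linearity of expectation gives, for each $v\in\FF_2^L$ of Hamming weight $w=|v|$,
$$\bar t_v = \frac{1}{n}\sum_{i=1}^n \frac{(n_i)_w\,(M-n_i)_{L-w}}{(M)_L},$$
where $M=|\matc|$, $n_i = \lambda_i M$ is the weight of column $i$, and $(x)_k \eqdef x(x-1)\cdots(x-k+1)$. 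The right-hand side is simply the hypergeometric probability of drawing $L$ codewords without replacement whose values on column $i$ exhibit pattern $v$.

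Second, I would apply the textbook ``without-replacement versus with-replacement'' bound. Expanding each falling factorial as a leading monomial plus an $O_L(M^{-1})$ correction yields
$$\left|\frac{(n_i)_w(M-n_i)_{L-w}}{(M)_L} - \lambda_i^w(1-\lambda_i)^{L-w}\right| \le \frac{C_1(L)}{M}$$
uniformly in $i$ and $v$. Since the $v$-coordinate of $\mathrm{Bern}^{\otimes L}(\lambda_i)$ is exactly $\lambda_i^w(1-\lambda_i)^{L-w}$, averaging over $i$ immediately gives~\eqref{eq:e1a}.

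Third, for~\eqref{eq:e1b} I would exploit the \emph{linearity} of $h_j$ as a functional on $\matt_L$. From~\eqref{eq:g1} one can write $h_j(T) = \sum_{v\in\FF_2^L} \alpha_v t_v$ with coefficients $\alpha_v$ depending only on $|v|$ and satisfying $|\alpha_v|\le 1$. Hence $|h_j(T_1)-h_j(T_2)| \le 2^L\|T_1-T_2\|$, and combining with~\eqref{eq:e1a} yields
$$\left|h_j(\bar T_L(\matc)) - \frac{1}{n}\sum_{i=1}^n h_j(\mathrm{Bern}^{\otimes L}(\lambda_i))\right| \le \frac{C_2}{M}.$$
Finally, $h_j(\mathrm{Bern}^{\otimes L}(\lambda)) = g_j(\lambda)$ is immediate from the definition~\eqref{eq:gdef}: under $\mathrm{Bern}^{\otimes L}(\lambda)$ the marginal weight distribution is $\mathrm{Bino}(L,\lambda)$ with mean $L\lambda$.

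There is no real obstacle beyond careful bookkeeping; the only non-routine observation is the symmetrization step, which reveals that each coordinate of $\bar T_L(\matc)$ is an average of hypergeometric probabilities determined by the column weights $\lambda_i$. Everything else reduces to the standard binomial-approximates-hypergeometric estimate and the linearity of $h_j$.
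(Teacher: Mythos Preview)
Your proposal is correct and follows essentially the same route as the paper: both compute the coordinates of $\bar T_L(\matc)$ column-by-column as hypergeometric probabilities (the paper writes them via binomial coefficients $\binom{M_i}{w}\binom{M-M_i}{L-w}/\binom{M}{L}$, you via the equivalent falling-factorial form), invoke the standard $O(1/M)$ comparison with the binomial to obtain~\eqref{eq:e1a}, and then derive~\eqref{eq:e1b} from the linearity/Lipschitz property of $h_j$ together with $h_j(\mathrm{Bern}^{\otimes L}(\lambda))=g_j(\lambda)$. Your explicit symmetrization (viewing $\bar T_L$ as the expectation over a uniformly random ordered $L$-tuple of distinct codewords) makes the hypergeometric identity slightly more transparent than the paper's presentation, but the argument is the same.
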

\begin{proof} 
Second statement~\eqref{eq:e1b} follows from the first via~\eqref{eq:g1a} and linearity of $h_j(T)$ in the type $T$,
cf.~\eqref{eq:g1}. To show the first statement, let $M=|\matc|$, $M_i = \lambda_i M$ and $p_w$ -- total probability 
assigned to vectors $v$ of weight $w$ by $\bar T_L(\matc)$. Then by computing $p_w$ over columns of $M\times n$ matrix
we obtain
	$$ p_w = {1\over n} \sum_{i=1}^n {{M_i \choose w}  {M-M_i \choose L-w} \over {M\choose L}}\,.$$
By a standard estimate we have for all $w=\{0,\ldots,L\}$:
$$ {{M_i \choose w}  {M-M_i \choose L-w} \over {M\choose L}} = {L\choose w} \lambda_i^w (1-\lambda_i)^{L-w} + O({1\over
M})\,,$$
with $O(\cdot)$ term uniform in $w$ and $\lambda_i$. By symmetry of the type $\bar T_L(\matc)$ the result~\eqref{eq:e1a}
follows.
\end{proof}

\begin{lemma}\label{th:f1} Functions $g_j$ defined in~\eqref{eq:gdef} are concave on $[0,1]$.
\end{lemma}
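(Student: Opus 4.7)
The plan is to rewrite $g_j$ in a form that reduces the statement to a classical concavity property of Bernstein-type sums. First I would use the identity $w - (2w-L-j)^+ = \min(w, L+j-w)$ (this was already verified during the proof of Lemma~\ref{th:g2}) to rewrite
$$
g_j(\nu) = \frac{1}{L+j}\,\EE_{W\sim\mathrm{Bino}(L,\nu)}\bigl[\min(W,L+j-W)\bigr]\,.
$$
The integrand $\phi(w) := \min(w, L+j-w)$ is the pointwise minimum of two affine functions of $w$, so it is concave on $\mathbb{R}$, and in particular its discrete second difference satisfies $\Delta^2\phi(w) := \phi(w+2)-2\phi(w+1)+\phi(w)\le 0$ for every $w\in\{0,\ldots,L-2\}$.

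The remaining step is to prove the general fact that for any $\phi:\{0,\ldots,L\}\to\mathbb{R}$ with $\Delta^2\phi\le 0$, the map $F(\nu):=\EE_{W\sim\mathrm{Bino}(L,\nu)}[\phi(W)]$ is concave on $[0,1]$. I would differentiate under the sum twice, using the standard identities $w\binom{L}{w}=L\binom{L-1}{w-1}$ and $(L-w)\binom{L}{w}=L\binom{L-1}{w}$, which yield
$$
F'(\nu)=L\,\EE_{W'\sim\mathrm{Bino}(L-1,\nu)}\bigl[\phi(W'+1)-\phi(W')\bigr]\,,
$$
and applying the same identities one more time to the inner expression gives
$$
F''(\nu)=L(L-1)\,\EE_{W''\sim\mathrm{Bino}(L-2,\nu)}\bigl[\Delta^2\phi(W'')\bigr]\le 0\,.
$$
Specializing this to $\phi(w)=\min(w,L+j-w)$ finishes the proof. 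There is no real obstacle here: the identity rewriting is purely algebraic, and the only substantive content is the textbook differentiation-under-the-sum argument, which I expect to be the bulk of the write-up.
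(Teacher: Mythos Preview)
Your argument is correct. Both proofs rest on the same binomial differentiation identities $w\binom{L}{w}=L\binom{L-1}{w-1}$ and $(L-w)\binom{L}{w}=L\binom{L-1}{w}$, but they are packaged differently. The paper keeps $g_j$ in the form ``linear minus $\EE[|2W-L-j|^+]$'' and computes the first derivative of the positive-part term explicitly, obtaining the closed-form expression $L\,\PP[V_\lambda\ge j_0]$ (with $V_\lambda\sim\mathrm{Bino}(L-1,\lambda)$), whose monotonicity in $\lambda$ then gives convexity. You instead pass to $\EE[\min(W,L+j-W)]$, recognize this as the Bernstein polynomial of a concave function, and invoke the general second-derivative identity $F''(\nu)=L(L-1)\,\EE[\Delta^2\phi]$ to conclude concavity is preserved. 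Your route is cleaner and more conceptual (it is essentially the classical fact that the Bernstein operator is shape-preserving), but note that the paper's explicit first-derivative formula is not wasted effort: it is reused verbatim in the proof of Lemma~\ref{th:n} (equations~\eqref{eq:nx1}--\eqref{eq:nx2}) to analyze $g_j$ near $x=\tfrac12$. If you adopt your version of Lemma~\ref{th:f1}, you will still need to record the formula $g_j'(x)=\frac{L}{L+j}\bigl(1-2\PP[V_x\ge (L+j)/2]\bigr)$ separately for that later use.
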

\begin{proof}
Let $W_\lambda\sim\mathrm{Bino}(L, \lambda)$ and $V_\lambda \sim \mathrm{Bino}(L-1, \lambda)$. Denote for convenience
$\bar \lambda = 1-\lambda$ and take $j_0$ to be an integer between $0$ and $L$. We have then
\begin{align} \lefteqn{{\partial\over \partial \lambda} \EE[|W_\lambda-j_0|^+] }\nonumber\\
	&= \sum_{w=j_0+1}^L {L \choose w}(w-j_0) \lambda^{w} \bar\lambda^{L-w} \left\{w \lambda^{-1} -(L-w) 
	\bar\lambda^{-1}\right\}\label{eq:f2a}\\
	&= {L\choose j_0 + 1}(j_0+1) \lambda^{j_0} \bar\lambda^{L-j_0 - 1} \nonumber\\
	&{} + \sum_{w=j_0+1}^{L-1} \bigg[{L\choose w+1} (w+1-j_0) (w+1){}\nonumber\\
	&{}  
			 - {L\choose w}(w-j_0)(L-w)\bigg] \lambda^w
	\bar\lambda^{L-w-1}\label{eq:f2} \\
	&= L{L-1\choose j_0} \lambda^{j_0} \bar\lambda^{L-1-j_0} + L \sum_{w=j_0+1}^{L-1} {L-1\choose w} \lambda^w
	\bar\lambda^{L-1-w}\label{eq:f3}\\
	&= L \PP[V_\lambda \ge j_0]\,,\label{eq:f4}
\end{align}
where in~\eqref{eq:f2} we shifted the summation by one for the first term under the sum in~\eqref{eq:f2a}, and
in~\eqref{eq:f3} applied identities ${L\choose w+1}={L\choose w}{L-w\over w+1}={L-1\choose w}{L\over
w+1}$. Similarly, if $\theta\in[0,1)$ we have
\begin{equation}\label{eq:f5} 
{\partial\over \partial \lambda} \EE[|W_\lambda-j_0-\theta|^+] = L\PP[V_\lambda \ge j_0 + 1] + L(1-\theta) \PP[V_\lambda
= j_0]\,.
\end{equation}
Similarly, one shows (we will need it later in Lemma~\ref{th:n}):
\begin{equation}\label{eq:f5a}
	{\partial\over \partial\lambda} \PP[W_\lambda \ge j_0] = L \PP[V_\lambda = j_0 - 1]\,.
\end{equation}

Since clearly the function in~\eqref{eq:f5} is strictly increasing in $\lambda$ for any $j_0$ and $\theta$ we conclude that 
$$ \lambda \mapsto \EE[|W_\lambda-j_0-\theta|^+] $$
is convex. This concludes the proof of concavity of $g_j$.
\end{proof}

\medskip
\begin{proof}[Proof of Lemma~\ref{th:john}] Our plan is the following:
\begin{enumerate}
\item Apply Elias-Bassalygo reduction to pass from $\matc'_n$ to a subcode $\matc''_n$ on an intersection of two spheres $S_{\xi_0
n}$ and $y+S_{\xi_1 n}$.
\item Use Lemma~\ref{th:j0} to pass to a symmetric subcode $\matc'''_n \subset \matc''_n$
\item Use Lemmas~\ref{th:e1}-\ref{th:f1} to estimate maxima of average radii $h_j$ over $\matc'''_n$.
\item Use Lemma~\ref{th:g2} to transport statement about $h_j$ to a statement on $\tau_L(\matc'''_n)$.
\end{enumerate}

We proceed to details. 
It is sufficient to show that for some constant $C=C(L)$ and arbitrary $\delta>0$ estimate~\eqref{eq:john} holds with
$\epsilon_n = C \delta$ whenever $n\ge n_0(\delta)$. So we fix $\delta>0$ and
consider a code $\matc' \subset S_{\xi_0 n} \subset \FF_2^n$ with $|\matc'|\ge \exp\{nR'+o(n)\}$. Note that for any $r$
, even
$m$ with $m/2 \le \min(r, n-r)$ and arbitrary $y\in S_r^n$ intersection $\{y+S_m^n\} \cap S_r^n$ is isometric to the
product of two lower-dimensional spheres:
\begin{equation}\label{eq:jx}
	\{y+S_m^n\} \cap S_r^n  \cong S_{r-m/2}^r \times S_{m/2}^{n-r}\,. 
\end{equation}
Therefore, we have for $r=\xi_0 n $ and valid $m$:
$$ \sum_{y\in S_r^n} |\{y + S_{m}^n\} \cap \matc'| = |\matc'| {\xi_0 n \choose \xi_0n - m/2} {n(1-\xi_0) \choose m/2}\,.$$
Consequently, we can select $m=\xi_1 n - o(n)$, where $\xi_1$ defined in~\eqref{eq:xi1def}, so that for some
$y\in S_r^n$:
$$ |\{y + S_{\rho n}^n\} \cap \matc'| > n\,. $$
Note that we focus on solution of~\eqref{eq:xi1def} satisfying $\xi_1 < 2\xi_0(1-\xi_0)$. For some choices of $R,
\delta$ and $\xi_0$ choosing $\xi_1 >
2\xi_0(1-\xi_0)$ is also possible, but such a choice appears to result in a weaker bound.

Next, we let $\matc'' =\{y + S_{\rho n}^n\} \cap \matc'$.
For sufficiently large $n$ the code $\matc''$ will satisfy assumptions of Lemma~\ref{th:j0} with $K\ge {1\over
\delta}$. Denote the resulting large symmetric subcode $\matc'''$. 

Note that because of~\eqref{eq:jx} column-densities $\lambda_i$'s of $\matc'''$, defined in Lemma~\ref{th:e1}, satisfy (after possibly
reordering coordinates):
$$ \sum_{i=1}^{\xi_0 n} \lambda_i = \xi_1 n/2 + o(n), \quad \sum_{i>\xi_0 n} \lambda_i = \xi_1 n/2 + o(n)\,. $$
Therefore, from Lemmas~\ref{th:e1}-\ref{th:f1} we have
\begin{multline}\label{eq:jx2}
	h_j(\bar T_L(\matc''')) \le \xi_0 g_j\left(1-{\xi_1\over 2\xi_0}\right) \\
		+ (1-\xi_0) g_j\left(\xi_1\over 2(1-\xi_0)\right) + \epsilon'_n + {C_1\over |\matc'''|}\,,
\end{multline}	
where $\epsilon'_n\to0$. Note that by construction the last
term in~\eqref{eq:jx2} is $O(\delta)$. Also note that the first two terms in~\eqref{eq:jx2} equal $\theta_j$ defined
in~\eqref{eq:thetadef}.

Finally, by Lemma~\ref{th:g2} we get that for any codewords $c_1, \ldots, c_L \in \matc'''$, some constant $C$ and
some sequence $\epsilon''_n\to0$ the following holds:
$$ {1\over n} \rad(0, c_1,\ldots, c_L) \le \theta(\xi_0, R', L) + \epsilon''_n + C\delta\,. $$
By the initial remark, this concludes the proof of Lemma~\ref{th:john}.
\end{proof}

\subsection{Proof of Corollary~\ref{th:slope}}\label{sec:slope}

\begin{lemma}\label{th:n} For any odd $L=2a+1$ there exists a neighborhood of $x={1\over2}$ such that 
\begin{equation}\label{eq:n1a}
	\max_j g_j(x) = g_1(x)\,,
\end{equation}	
	maximum taken over $j$ equal all the odd numbers not exceeding $L$ and $j=0$.
	We also have for some $c>0$
	\begin{equation}\label{eq:n2}
		g_1(x) = {1\over 2} - 2^{-L-1}{L\choose {L-1\over2}} + cx + O((2x-1)^2), \qquad x\to{1\over2}\,.
	\end{equation}	
\end{lemma}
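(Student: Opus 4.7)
The plan is to evaluate each $g_j(1/2)$ in closed form, establish strict maximality of $g_1(1/2)$ among the admissible $j$, and then use formula~\eqref{eq:f4} of the paper to compute $g_1'(1/2)$.

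I would begin by introducing the auxiliary variable $Y := 2W - L$ for $W \sim \mathrm{Bino}(L,1/2)$. Since $L = 2a+1$ is odd, $Y$ takes only odd values, is symmetric about $0$, and satisfies $|Y| \ge 1$ almost surely. Consequently, for every $j \ge 1$ symmetry gives $\EE[(Y-j)^+] = \tfrac{1}{2}\EE[(|Y|-j)^+]$, so
\begin{equation}\label{eq:planone}
g_j(1/2) = \frac{L - M_j}{2(L+j)}, \qquad M_j \eqdef \EE[(|Y|-j)^+].
\end{equation}
The classical identity $\EE[|Y|] = (L+1)q$ with $q \eqdef 2^{-L}\binom{L}{(L-1)/2}$, together with $M_1 = \EE[|Y|] - 1 = (L+1)q - 1$, yields $g_1(1/2) = \tfrac{1}{2} - q/2$, matching the constant in~\eqref{eq:n2}. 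A direct computation of $g_0(1/2) = \tfrac{1}{2} - (L+1)q/(2L)$ gives $g_1(1/2) - g_0(1/2) = q/(2L) > 0$.

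For odd $j$ with $3 \le j \le L$, the inequality $g_1(1/2) > g_j(1/2)$ rearranges via~\eqref{eq:planone} into $M_1 - M_j < (j-1)(1-q)$. The key elementary identity $(y-1)^+ - (y-j)^+ = \sum_{i=1}^{j-1} \mathbf{1}\{y > i\}$, valid for non-negative integers $y$, immediately yields $M_1 - M_j = \sum_{i=1}^{j-1} \PP[|Y| > i]$. Each summand is bounded above by $\PP[|Y| > 1] = 1 - 2q < 1 - q$ (strictly, since $q>0$), which gives the required strict inequality. Combined with the $j=0$ case, this proves $g_1(1/2) > g_j(1/2)$ for every admissible $j \neq 1$; continuity of the polynomials $g_j$ then delivers~\eqref{eq:n1a} throughout a neighborhood of $1/2$.

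For the expansion~\eqref{eq:n2}, I would exploit the fact that $(L+1)/2$ is an integer for odd $L$ to write $|2W - L - 1|^+ = 2(W - (L+1)/2)^+$. Then formula~\eqref{eq:f4} of the paper gives $g_1'(\nu) = \tfrac{L}{L+1}\bigl(1 - 2\PP[V_\nu \ge (L+1)/2]\bigr)$ with $V_\nu \sim \mathrm{Bino}(L-1,\nu)$. At $\nu = 1/2$, symmetry of $V \sim \mathrm{Bino}(2a,1/2)$ gives $1 - 2\PP[V \ge a+1] = \PP[V=a] = \binom{2a}{a}2^{-2a}$, so $c := g_1'(1/2) = \tfrac{L}{L+1}\binom{L-1}{(L-1)/2} 2^{-(L-1)}$ is strictly positive, and Taylor-expanding $g_1$ around $1/2$ yields~\eqref{eq:n2}. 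The only non-routine step is the second one: the calculation becomes transparent only after spotting the telescoping decomposition of $M_1 - M_j$ into tail probabilities of $|Y|$, which reduces the comparison between $g_1$ and the other $g_j$'s at $1/2$ to a simple monotonicity statement.
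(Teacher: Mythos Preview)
Your proof is correct. The computation of $g_1(1/2)$, the $j=0$ comparison, and the derivative calculation via~\eqref{eq:f4} are essentially the same as in the paper. The genuine difference lies in how you handle $g_1(1/2)>g_j(1/2)$ for odd $j\ge 3$. The paper does not evaluate $g_j(1/2)$ exactly; instead it first proves the pointwise inequality $g_1(x)-g_j(x)\ge\frac{j-1}{L+j}\bigl(g_1(x)-\PP[W_x>a+1]\bigr)$ (via an upper bound on $g_j$), so that all cases $j\ge 3$ reduce to the single inequality $\PP[W_{1/2}>a+1]<g_1(1/2)$, which is then rewritten as the binomial-sum inequality~\eqref{eq:n2a} and dispatched by the telescoping identity $\sum_{u\le\ell}(n-2u)\binom{n}{u}=n\binom{n-1}{\ell}$. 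Your route is more direct: the exact closed form $g_j(1/2)=\frac{L-M_j}{2(L+j)}$ turns the comparison into $M_1-M_j<(j-1)(1-q)$, and the telescoping $M_1-M_j=\sum_{i=1}^{j-1}\PP[|Y|>i]\le(j-1)\PP[|Y|>1]=(j-1)(1-2q)$ finishes at once since $q>0$. Your argument is shorter and avoids the auxiliary upper bound on $g_j$; the paper's approach has the minor side benefit that its intermediate inequality holds for all $x$, not only $x=1/2$, though that extra generality is never used.
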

\begin{proof} First, the value $g_1(1/2)$ is computed trivially. Then from~\eqref{eq:f5} we have
	\begin{equation}\label{eq:nx1}
		{d\over dx} g_j(x) = {L\over L+j}\left(1 - 2\PP\left[V_x \ge {L+j\over2}\right]\right),
\end{equation}	
	where $j\ge 1$ and $V_x \sim \mathrm{Bino}(x, L-1)$. This implies~\eqref{eq:n2}. For future reference we note that~\eqref{eq:n1b} (below) and~\eqref{eq:f5a}
	imply
	\begin{multline}\label{eq:nx2}
		{d\over dx} g_0(x) = 1- 2\PP[V_x\ge {L+1\over2}] - \PP[V_x={L-1\over2}], \\
			\quad V_x \sim \mathrm{Bino}(x,	L-1)\,.  
	\end{multline}	
\apxonly{Also: $$ {d\over dx} g_j(x) = -{2L(L-1)\over L+j} \PP\left[\mathrm{Bino}(x, L-2) = {L+j-2\over2}\right] $$}
By continuity,~\eqref{eq:n1a} follows from showing
\begin{equation}\label{eq:n1}
		g_1(1/2) > \max_{j\in\{0, 3, 5, \ldots L\}} g_j(1/2)\,.
\end{equation}	
	Next, consider $W_x\sim \mathrm{Bino}(x, L)$ and notice the upper-bound
	$$ g_j(x) \le {1\over L+j} \EE\left[ W_x1\{W_x\le a\} + (L+j-W_x)1\{W_x \ge a+1\} \right]\,.$$
	Then, substituting expression for $g_1(x)$ we get 
	\begin{align} g_1(x) - g_0(x) &= {1\over L}\left(\PP[W_x\ge a+1] - g_1(x)\right)\label{eq:n1b}\\
	   g_1(x) - g_j(x) &\ge {j-1\over L+j}\left(g_1(x) - \PP[W_x>a+1]\right)\,. 
\end{align}
	Thus, to show~\eqref{eq:n1} it is sufficient to prove that for $x=1/2$ we have
	\begin{equation}\label{eq:n3}
		\PP[W_{1\over2} > a+1] < g_1(1/2) < \PP[W_{1\over2}\ge a+1]\,.
\end{equation}	
The right-hand inequality is trivial since $\PP[W_{1\over2}\ge a+1] = 1/2$ while from~\eqref{eq:n2} we know $g_1(1/2)<1/2$.
	The left-hand inequality, after simple algebra, reduces to showing
	\begin{equation}\label{eq:n2a}
		\sum_{u=0}^{a-1} (2a+1-2u) {2a+1 \choose u} < (2a+1) {2a+1\choose a}\,. 
\end{equation}
	Notice, that
		$$ (n-2u){n\choose u} = n\left[ {n-1 \choose u} - {n-1\choose u-1}\right]  \forall u\ge 0$$
	and therefore
		$$ \sum_{u \le \ell} (n-2u){n\choose u} = n {n-1 \choose \ell}\,.$$
	Plugging this identity into the right-hand side of~\eqref{eq:n2a} we get
	\begin{multline} \sum_{u=0}^{a-1} (2a+1-2u) {2a+1 \choose u} = (2a+1) {2a \choose a-1} \\
		< (2a+1) {2a\choose a} < (2a+1){2a+1 \choose a}
	\end{multline}	
	completing the proof of~\eqref{eq:n2a}.
\end{proof}

\begin{proof}[Proof of Corollary~\ref{th:slope}]
	We first show that~\eqref{eq:sl0} implies~\eqref{eq:zeroslope}. 
	To that end, fix a small $\epsilon>$ so that ${1\over2}-\epsilon$ belongs to the neighborhood existence of which
	is claimed in Lemma~\ref{th:n}. Choose rate so that $\delta_{LP1}(R)=1/2-\epsilon$ and notice that this implies 
	\begin{align} R &= h(\epsilon^2 + o(\epsilon^2) )\,,\label{eq:sm1}
	\end{align}
	By Lemma~\ref{th:n}, the right-hand side of~\eqref{eq:sl0} is 
		$$ \tau^*_L(0) - \mathrm{const} \cdot \epsilon + o(\epsilon)\,,$$
		which together with~\eqref{eq:sm1} implies~\eqref{eq:zeroslope}.

	To prove~\eqref{eq:sl0} we use Theorem~\ref{th:main} with $\delta=\delta_{LP1}(R)$. Next, 
	use concavity of $g_j$'s (Lemma~\ref{th:f1}) to relax~\eqref{eq:thmain} to
	$$ 	\limsup_{n\to\infty} \tau_{L}(\matc_n) \le \max_{j, \xi_0} g_j(\xi_0)\,. $$
	From~\eqref{eq:nx1} and~\eqref{eq:nx2} it is clear that $\xi_0 \mapsto g_j(\xi_0)$ is monotonically
	increasing for all $j\ge 0$ on the interval $[0,1/2]$. Thus, we further have
	\begin{equation}\label{eq:sl1}
		\limsup_{n\to\infty} \tau_{L}(\matc_n) \le \max_{j} g_j(\delta_{LP1}(R))\,. 
\end{equation}	
	Bound~\eqref{eq:sl1} is valid for all $R\in[0,1]$ and arbitrary (odd/even $L$). However, when $R$ is small (say,
	$R<R_0$) and $L$ is odd,
	$\delta_{LP1}(R)$ belongs to the neighborhood of $1/2$ in Lemma~\ref{th:n} and thus~\eqref{eq:sl0} follows from~\eqref{eq:sl1}
	and~\eqref{eq:n1a}.
\end{proof}

\begin{remark}\label{rm:evenL} It is, perhaps, instructive to explain why Corollary~\ref{th:slope} cannot be shown for even $L$ (via
	Theorem~\ref{th:main}). For even $L$ the maximum over $j$ of $g_j(1/2 - \epsilon)$ is attained at $j=0$ and 
\begin{equation}\label{eq:sm3}
		g_0({1\over 2} - \epsilon) = \tau^*_L(0) + c \epsilon^2 + O(\epsilon^3)\,,  \epsilon\to 0
\end{equation}
Therefore, for $\delta_{LP1}(R) = {1\over2}-\epsilon$ we get from~\eqref{eq:sm3} that the right-hand side
of~\eqref{eq:sl1} evaluates to 
\begin{equation}\label{eq:sm4}
		\tau_L^*(0) - \mathrm{const} \cdot \epsilon^2 \log {1\over \epsilon}\,. 
\end{equation}	
Thus, comparing~\eqref{eq:sm4} with~\eqref{eq:sm1} we conclude that for even $L$ our bound  on $R^*_L(\tau)$ has
negative slope at zero rate. Note that Blinovsky's bound~\eqref{eq:blin} has negative slope at zero rate for both
odd and even $L$.
\end{remark}

\apxonly{Apxonly remarks:
\begin{enumerate}
	\item For $L$ -- even, optimal choice is NOT $\delta = \delta_{LP}$, but is about 10\% larger for small rates.
	This does give some improvement, but still the bound appears to always be worse than Blinovsky.
\end{enumerate}
}

\subsection{Proof of Corollary~\ref{th:l3}}

\begin{proof}
 Instead of working with parameter $\delta$ we introduce $\beta \in[0,1/2]$ such that
	$$ \delta = {1\over2} - \sqrt{\beta (1-\beta)}\,. $$
We then apply Theorem~\ref{th:main} with $h(\beta)=R$. Notice that the bound on $\xi_0$ in~\eqref{eq:xi0ineq} becomes 
$$ 0 \le \xi_0 \le \delta\,. $$
By a simple substitution $\omega=\sqrt{\beta\over 1-\beta}$ we get from~\eqref{eq:ed_2}
$$ E_\beta(\delta) = {1\over2}(\log 2 - h(\delta) + h(\beta))\,. $$
Therefore, when $\xi_0 = \delta$ we notice that 
$$ R+h(\beta)-2E_\beta(\xi_0) = R-\log 2 + h(\delta) $$
implying that defining equation for $\xi_1$, i.e.~\eqref{eq:xidef}, coincides with~\eqref{eq:bd3c}.

Next for $L=3$ we compute
\begin{align} g_0(\nu) &= \nu(1-\nu)\,,\\
   g_1(\nu) &= {3\over4}\nu - {1\over 2} \nu^3\,,\\
   g_3(\nu) &= {1\over 2}\nu\,. 
\end{align}   

Note that the right-hand side of~\eqref{eq:bd3b} is precisely equal to 
$$ \delta g_1\left(1-{\xi_1\over 2\delta}\right) + 
	(1-\delta) g_1\left({\xi_1\over 2(1-\delta)}\right)\,. $$ 
So this corollary simply states
that for $L=3$ the maximum in~\eqref{eq:thmain} is achieved at $j=1, \xi_0 = \delta$. 
Let us restate this last statement rigorously: The maximum
\begin{equation}\label{eq:m1}
	\max_{j\in\{0, 1, 3\}} \max_{\xi_0 \in \delta} \xi_0 g_{j}\left(1-{x\over 2\xi_0}\right) 
			+ (1-\xi_0) g_j\left(x\over 2(1-\xi_0)\right) 
\end{equation}			
is achieved at $j=1, \xi_0=\delta$. Here $x = x(\xi_0, \beta)$ is a solution of
\begin{multline}\label{eq:m3}
	2(h(\beta) - E_\beta(\xi_0))  \\
		= h(\xi_0) - \xi_0 h\left(x\over 2\xi_0\right) - (1-\xi_0) h\left(x\over
2(1-\xi_0)\right)\,. 
\end{multline}
For notational convenience we will denote the function under maximization in~\eqref{eq:m1} by $g_j(\xi_0, x)$.

We proceed in two steps:
\begin{itemize}
\item First, we estimate the maximum over $\xi_0$ for $j=0$ as follows:
	\begin{multline}\label{eq:m2}
		\max_{\xi_0} g_0(\xi_0, x) \le {\log 2 - R\over 4 \log 2} \cdot \left(1- {1-\delta \over {a_{max}}
		(1-a_{max})}\right) \\
			+ (1-\delta) g_0(a_{min})\,,
	\end{multline}	
	where $a_{max}, a_{min} \le {1\over2}$ are given by
		\begin{align} a_{max} &= h^{-1}(\log2-R)\,,\label{eq:m2a}\\
		   a_{min} &= h^{-1}\left(\log2- {R\over 1-\delta}\right)\label{eq:m2b}\,. 
		\end{align}		   
\item Second, we prove that for $j=1$ function
	$$ \xi_0 \mapsto g_j(\xi_0, x(\xi_0)) $$
	is monotonically increasing.
\end{itemize}

Once these two steps are shown, it is easy to verify (for example, numerically) that $g_1(\delta, x(\delta))$ exceeds
both ${1\over 2}\delta$ (term corresponding to $j=3$ in~\eqref{eq:m1}) and the right-hand side of~\eqref{eq:m2} (term
corresponding to $j=0$). Notice that this relation holds for all rates. Therefore, maximum in~\eqref{eq:m1} is indeed
attained at $j=1, \xi_0=\delta$.

One trick that will be common to both steps is the following. From the proof of Lemma~\ref{th:john} it is clear that
the estimate~\eqref{eq:john} is monotonic in $R'$. Therefore, in equation~\eqref{eq:m3} we may replace $E_\beta(\xi)$
with any upper-bound of it. We will use the well-known upper-bound, which leads to binomial estimates of spectrum
components~\cite[(46)]{litsyn1999new}:
\begin{equation}
	E_\beta(\xi_0) \le {1\over2}(\log 2 +h(\beta) - h(\xi_0))\,. 
\end{equation}
Furthermore, it can also be argued that maximum cannot be attained by $\xi_0$ so small that 
$$ h(\beta) - {1\over2}(\log 2 +h(\beta) - h(\xi_0)) < 0\,. $$
So from now on, we assume that 
$$ h^{-1}(\log2 - h(\beta)) \le \xi_0 \le \delta \,,$$
and that
$x = x(\xi_0) \le 2\xi_0(1-\xi_0)$ is determined from the equation:
\begin{equation}\label{eq:m3a}
	\log 2 - R = \xi_0 h\left(x\over 2\xi_0\right) + (1-\xi_0) h\left(x\over 2(1-\xi_0)\right)
\end{equation}
(we remind $R=h(\beta)$).

We proceed to demonstrating~\eqref{eq:m2}. For convenience, we introduce
\begin{align} a_1 &\eqdef 1-{x\over 2\xi_0}\,,\label{eq:m4a}\\
   a_2 &\eqdef {x\over 2-2\xi_0}\,. \label{eq:m4b}
\end{align}   
By constraints on $x$ it is easy to see that
$$ 0 \le a_2 \le \min(a_1, 1-a_1)\,. $$
Therefore, we have
$$ \log2-R = \xi_0 h(a_1) + (1-\xi_0) h(a_2) \ge h(a_2) $$
and thus $a_2 \le a_{max}$ defined in~\eqref{eq:m2a}. Similarly, we have
$$ \log2-R = \xi_0 h(a_1) + (1-\xi_0) h(a_2) \le \xi_0 \log 2 + (1-\xi_0) h(a_2)\,, $$
and since $\xi_0 \le \delta$ we get that $a_2 \ge a_{min}$ defined in~\eqref{eq:m2b}.

Next, notice that ${h(x)\over x(1-x)}$ is decreasing on $(0,1/2]$. Thus, we have
\begin{align} 
   h(a_1) &\ge g_0(a_1) 4\log 2 \\
   h(a_2) &\ge h(a_{max}) {g_0(a_2)\over g_0(a_{max})} \nonumber\\
   	  &= {\log2-R\over a_{max}(1-a_{max})} g_0(a_2) \eqdef c \cdot
   g_0(a_2)\,,
\end{align}
where in the last step we introduced $c>4\log2$ for convenience.
Consequently, we get
\begin{align} \lefteqn{\log 2-R}\nonumber\\
{} &= \xi_0 h(a_1) + (1-\xi_0) h(a_2) \\
	&\ge 4\log 2 \cdot \xi_0 g_0(a_1)  + (1-\xi_0)  c \cdot g_0(a_2)\\
	&= 4\log 2 \cdot g_0(\xi_0, x) + (1-\xi_0) (c-4\log2) \cdot g_0(a_2)\\
	&\ge 4\log 2 \cdot g_0(\xi_0, x) + (1-\delta) (c-4\log2) \cdot g_0(a_{min})\,.
\end{align}
Rearranging terms yield~\eqref{eq:m2}.

We proceed to proving monotonicity of~\eqref{eq:m3}. The technique we will use is general (can be applied to $L>3$ and
$j>1$), so we will avoid particulars of $L=3, j=1$ case until the final step.

Notice that regardless of the function $g(\nu)$ we have the equivalence:
\begin{multline} {d\over d\xi_0}  \xi_0 g(a_1) + (1-\xi_0) g(a_2) \ge 0 \quad \iff \quad\\
		{1\over 2} {dx\over d\xi_0} (g'(a_2)-g'(a_1)) \ge \int\limits_{a_2}^{a_1} (1-x)(-g''(x)) dx - g'(a_2)\,,
		\label{eq:m4}
\end{multline}
where we recall definition of $a_1,a_2$ in~\eqref{eq:m4a}-\eqref{eq:m4b}. 
Differentiating~\eqref{eq:m3a} in $\xi_0$ (and recalling that $R$ is fixed, while $x=x(\xi_0)$ is an implicit function
of $\xi_0$) we find
$$ {dx\over d\xi_0} = -2{\log {1-a_2\over a_1} \over \log {1-a_2\over a_2} {a_1\over 1-a_1}} <0\,. $$
\apxonly{Without binomial bound we have:
$$ {dx\over d\xi_0} = (\log {1-\xi_0\over \xi_0} + 2\log{1-\omega\over 1+\omega} - F_1)/F_2 $$
where $F_1 = h(a_1)-h(a_2) + (1-a_1) \log{1-a_1\over a_1} + a_2 \log (1-a_2\over a_2)$, $F_2 = {1\over 2} ( \log
{1-a_2\over a_2} - \log {1-a_1\over a_1})$ and $\omega=\omega(\xi_0)$ is found via
$$ \omega = {1-2\xi_0 - \sqrt{(1-2\xi_0)^2 - 4\beta\bar\beta}\over 2\bar\beta}\,. $$
Note that binomial bound corresponds to just lower-bounding
$$ \log {1-\xi_0\over \xi_0} + 2\log{1-\omega\over 1+\omega} \ge 0\,. $$
}%

Next, one can notice that the map $(\xi_0, x, R) \mapsto (a_1, a_2)$ is a bijection onto the region
\begin{equation}\label{eq:m5}
	\{(a_1, a_2): 0\le a_1\le 1, 0\le a_2\le a_1 (1-a_1)\}\,.
\end{equation}
With the inverse map given by
\begin{multline*} \xi_0 = {a_2\over 1-a_1 + a_2}, x = {2a_2^2\over 1-a_1 + a_2}, \\
	R=\log2 - \xi_0 h(a_1)-(1-\xi_0)h(a_2)\,.
\end{multline*}

Thus, verifying~\eqref{eq:m4} can as well be done for all $a_1,a_2$ inside the region~\eqref{eq:m5}. Substituting
$g=g_1$ into~\eqref{eq:m4} we get that monotonicity in~\eqref{eq:m3} is equivalent to a two-dimensional inequality:
\begin{multline}\label{eq:m6}
	-2\log {1-a_2\over a_1} \cdot (a_1^2 - a_2^2) \\
		\ge (2a_1^2 - {4\over3}(a_1^3 - a_2^3) - 1) \log {1-a_2\over
a-2}{a_1\over 1-a_1}\,.
\end{multline}
It is possible to verify numerically that indeed~\eqref{eq:m6} holds on the set~\eqref{eq:m5}. For example, one may
first demonstrate that it is sufficient to restrict to $a_2 = 0$ and then verify a corresponding inequality in $a_1$
only. We omit mechanical details.
\end{proof}

\centerline{\sc Acknowledgement}
We thank Prof. A. Barg for reading and commenting on an earlier draft and anonymous reviewers for pointing out a mistake in
the previous version of Table~\ref{tab:comp} and for simplifying proof of~\eqref{eq:n2a}.

\begin{IEEEbiographynophoto}{Yury Polyanskiy}(S'08-M'10-SM'14) is an 
Associate Professor of Electrical Engineering and Computer Science and a member of LIDS at MIT.
Yury received M.S. degree in applied mathematics and physics from the Moscow Institute of Physics and Technology,
Moscow, Russia in 2005 and Ph.D. degree in electrical engineering from Princeton
University, Princeton, NJ in 2010. In 2000-2005 he lead the development of the embedded software in the
Department of Oilfield Surface Equipment, Borets Company LLC (Moscow). Currently, his research focuses on basic questions in information theory, error-correcting codes, wireless communication and fault-tolerant and defect-tolerant circuits.
Dr. Polyanskiy won the 2013 NSF CAREER award and 2011 IEEE Information Theory Society Paper Award.
\end{IEEEbiographynophoto}


\end{document}